\documentclass[12pt]{article}
\usepackage{babel}
\usepackage[T1]{fontenc}
\usepackage[latin9]{inputenc}
\setcounter{secnumdepth}{3}
\usepackage{verbatim}
\usepackage{amsmath}
\usepackage{amsthm}
\usepackage{amssymb}
\usepackage{graphicx}
\usepackage[a4paper,margin=1in]{geometry}
\usepackage{caption}
\setcaptionwidth{0.8\textwidth}

\usepackage{xcolor}

\theoremstyle{plain}
\newtheorem{thm}{\protect\theoremname}
\theoremstyle{remark}

\theoremstyle{remark}

\theoremstyle{definition}

\theoremstyle{plain}

\theoremstyle{plain}

\theoremstyle{plain}

\providecommand{\casename}{Case}
\providecommand{\claimname}{Claim}
\providecommand{\corollaryname}{Corollary}
\providecommand{\definitionname}{Definition}
\providecommand{\lemmaname}{Lemma}
\providecommand{\propositionname}{Proposition}
\providecommand{\remarkname}{Remark}
\providecommand{\theoremname}{Theorem}

\usepackage{physics}
\usepackage{bbm}
\usepackage{enumitem}
\usepackage{amsfonts}

\usepackage{authblk} 

\newlist{casenv}{enumerate}{4}
\setlist[casenv]{leftmargin=*,align=left,widest={iiii}}
\setlist[casenv,1]{label={{\itshape\ \casename} \arabic*.},ref=\arabic*}
\setlist[casenv,2]{label={{\itshape\ \casename} \roman*.},ref=\roman*}
\setlist[casenv,3]{label={{\itshape\ \casename\ \alph*.}},ref=\alph*}
\setlist[casenv,4]{label={{\itshape\ \casename} \arabic*.},ref=\arabic*}


\makeatother

\begin{document}

\title{Observers of quantum systems cannot agree to disagree}

\author[1]{Patricia Contreras-Tejada}

\author[2]{Giannicola Scarpa}

\author[3]{Aleksander M. Kubicki}

\author[4]{Adam Brandenburger}

\author[5]{Pierfrancesco La Mura}

\affil[1]{Instituto de Ciencias Matem\'aticas, 28049 Madrid, Spain,
patrcont@ucm.es}
\affil[2]{Escuela T\'ecnica Superior de Ingenier\'ia de Sistemas Inform\'aticos, Universidad Polit\'ecnica de Madrid, 28031 Madrid, Spain, g.scarpa@upm.es (Corresponding Author)}
\affil[3]{Departamento de An\'alisis Matem\'atico y Matem\'atica Aplicada, Universidad Complutense de Madrid, 28040 Madrid, Spain, amkubickif@gmail.com}
\affil[4]{Stern School of Business, Tandon School of Engineering, NYU Shanghai,
New York University, New York, NY 10012, U.S.A., adam.brandenburger@nyu.edu,
http://www.adambrandenburger.com}
\affil[5]{HHL Leipzig Graduate School of Management, 04109 Leipzig, Germany, plamura@hhl.de}

\date{}
\maketitle
\thispagestyle{empty}

\abstract{
Is the world quantum? An active research line in quantum foundations is devoted to exploring what constraints can rule out the postquantum theories that are consistent with experimentally observed results. We explore this question in the context of epistemics, and ask whether agreement between observers can serve as a physical principle that must hold for any theory of the world. Aumann's seminal Agreement Theorem states that two observers (of classical systems) cannot agree to disagree. We propose an extension of this theorem to no-signaling settings. In particular, we establish an Agreement Theorem for observers of quantum systems, while we construct examples of (postquantum) no-signaling boxes where observers can agree to disagree. The PR box is an extremal instance of this phenomenon. These results make it plausible that agreement between observers might be a physical principle, while they also establish links between the fields of epistemics and quantum information that seem worthy of further exploration.
}

%

\newpage

\section{Introduction}

Quantum mechanics famously made its creators uncomfortable. 
It is highly counterintuitive and, almost a century after its introduction, it still sparks much conceptual and philosophical discussion. Indeed, an active line of research in quantum foundations deals with the problem of singling out quantum theory from other post-classical physical theories. This field is a delicate balance between proposals for new theories that are `tidier' than quantum mechanics \cite{spekkens_evidence_2007,larsson_contextual_2012} and proposals for desirable physical principles that such theories should obey \cite{popescu_quantum_1994,clifton_characterizing_2003,pawlowski_information_2009,sun_no_2019,yan_quantum_2013}.

In the domain of classical probability theory, Aumann proved that Bayesian agents cannot agree to disagree \cite{aumann_agreeing_1976}. A slightly more general restatement of Aumann's theorem, which we will refer to as the classical agreement theorem, states that, if Alice and Bob, based on their partial information, assign probabilities $q_{A},\,q_{B}$, respectively, to perfectly correlated events, and these probabilities are common certainty between them, then $q_{A}=q_{B}$. ``Certainty'' means assigning probability 1, and ``common certainty'' means that Alice is certain about $q_{B}$; Bob is certain about $q_{A}$; Alice is certain about Bob being certain about $q_{A}$; Bob is certain about Alice being certain about $q_{A}$; and so on infinitely.

This result is considered a basic requirement in classical epistemics, which is the formal study of  the knowledge and beliefs of the agents in a system. The classical agreement theorem has been used to show that two risk-neutral agents, starting from a common prior, cannot agree to bet with each other \cite{geanakoplos_sebenius_1983}, to prove ``no-trade'' theorems for efficient markets \cite{milgrom_stokey_1982}, and to establish epistemic conditions for Nash equilibrium \cite{aumann_brandenburger_1995}. 

Focusing on the quantum domain, a fundamental result of quantum mechanics is that no local hidden-variable theory can model the results of all quantum experiments \cite{bell_einstein_1964}. This implies that the classical Bayesian model does not apply, so the classical agreement theorem need not hold. The question then arises: Can observers of quantum mechanical phenomena agree to disagree?

In this work, we answer the above question in the negative. We define two notions of disagreement inspired by Aumann's theorem. One is a direct analogue to the classical agreement theorem, and the other one relaxes the common certainty condition while requiring that the probability estimates differ maximally. We find that neither kind of disagreement occurs for classical or quantum systems.  However, both kinds of disagreement do occur in postquantum environments. In fact, we characterize no-signaling distributions displaying these behaviours. 
We then put our two characterizations together and search for distributions that satisfy both notions of disagreement: We find that the PR box \cite{popescu_quantum_1994} is of this kind---i.e., it displays extremal disagreement in the above sense. Since the PR box is also an extreme instance of a no-signaling box as a non-local resource  \cite{popescu_quantum_1994,barret_nonlocal_2005}, our findings suggest a deeper relation between the quantification of disagreement and the quantification of non-locality.

If a physical theory were to allow agents to agree to disagree, then undesirable consequences in the settings of Refs. \cite{geanakoplos_sebenius_1983,milgrom_stokey_1982,aumann_brandenburger_1995} could happen. This is why the impossibility of agreeing to disagree is a desirable feature for all physical theories, and why we propose that it should be elevated to a physical principle. Its simplicity makes it convenient for testing the consistency of new postquantum theories.

\section{Results}

\subsection{Classical agreement theorem}\label{sec:classical-agreement}

We start with an intuition about the setup behind the classical agreement theorem. Suppose that Alice and Bob share a classical system, which can, thus, be described by a local hidden-variable model (in Aumann's language, each value of the variable represents a state of the world). But the observers do not know which value of the hidden variable is the one that holds (i.e., which is the true state of the world). Instead, each observer can perform only one local measurement on the system. Each measurement corresponds to a partition over the values of the variable, and the result reveals which partition element contains the value that holds. The probability of each outcome is the sum of the probabilities of the values in the corresponding partition element.

Suppose, also, that Alice is interested in estimating the probability of an event (i.e., a set of values of the variable) that does not correspond to an element in her partition. Then, she can  calculate only the conditional probability of the event given the outcome of her local measurement by Bayesian inference. The same applies for Bob. Suppose that Alice and Bob are interested in events that are perfectly correlated (i.e., with probability 1, either both happen or neither happens). Then, the classical agreement theorem says that, if their estimates are common certainty, they must be equal.

Common certainty means that Alice is certain of (i.e., assigns probability 1 to) Bob's estimate, Bob is certain of Alice's estimate, Alice is certain of Bob's certainty of Alice's estimate, and so on.

When formalizing these notions, we refer to a probability space, together with some given partitions, as a (classical) ontological model. Ontological models appearing in the literature (see, e.g., \cite{ferrie_quasi-probability_2011}) also contain a set of preparations underlying the distribution over the probability space, and the partitions are usually phrased in terms of measurements and outcomes. However, we consider preparations implicit and use the language of partitions to bridge the gap between classical probability spaces and no-signaling boxes more smoothly.

For the sake of simplicity and following Aumann, we restrict our analysis to two observers, Alice and Bob. Aumann's original theorem considers common knowledge of one single event of interest to both observers. We provide a slight generalization with common certainty about two perfectly correlated events of interest, one for each observer. This allows us to move to the framework of no-signaling boxes that we will use later. This is what we call the classical agreement theorem. This terminology can be further motivated by the fact that, for purely classical situations, both statements---the original Aumann's theorem and our formulation with perfectly correlated events---can be proven to be equivalent (as long as states of the world with null probability are ignored, as in Ref. \cite{aaronson_complexity_2005}).

Consider a probability space $(\Omega,\mathcal{E},\mathsf{P})$ where $\Omega$ is a  finite   set of possible states of the world; 
$\mathcal{E}$ is its power set (i.e., the set of events); and $\mathsf{P}$ is a probability measure over $\Omega$.
We will consider two events $E_{A},E_{B}\in\mathcal{E}$ of interest to Bob and Alice, respectively (the choice of subscripts will become clear later). We will assume that they are perfectly correlated: 
$\mathsf{P}(E_{A}\backslash E_{B})=\mathsf{P}(E_{B}\backslash E_{A})=0.$

Fix partitions $\mathcal{P}_{A},\mathcal{P}_{B}$ of $\Omega$ for
Alice and Bob, respectively. For convenience, assume that all members of the join (coarsest
common refinement) of $\mathcal{P}_{A}$ and $\mathcal{P}_{B}$ are
non-null. For a state $\omega\in\Omega,$ $\mathcal{P}_{A}(\omega)$  ($\mathcal{P}_{B}(\omega)$) is the partition element of Alice's (Bob's) that contains $\omega.$ For each $n\in\mathbb{N},$ fix numbers $q_{A},q_{B}\in[0,1]$
and consider the following sets: 
\begin{align} 
\label{eq:A0B0} A_{0} & =\left\{ \omega\in\Omega:\mathsf{P}(E_{B}|\mathcal{P}_{A}(\omega))=q_{A}\right\}, \\
B_{0} & =\left\{ \omega\in\Omega:\mathsf{P}(E_{A}|\mathcal{P}_{B}(\omega))=q_{B}\right\}, \\
A_{n+1} & =\left\{ \omega\in A_{n}:\mathsf{P}(B_{n}|\mathcal{P}_{A}(\omega))=1\right\}, \\
B_{n+1} & =\left\{ \omega\in B_{n}:\mathsf{P}(A_{n}|\mathcal{P}_{B}(\omega))=1\right\} .
\label{eq:A0B0_end}
\end{align}
Here, the set $A_0$ is the set of states $\omega$ such that Alice assigns probability $q_A$ to event $E_B$; the set $B_1$ is the set of states $\omega$ such that Bob assigns probability $q_B$ to event $E_A$ and probability 1 to the states in $A_0$---i.e., states where Bob assigns probability $q_B$ to $E_A$ and is certain that Alice assigns probability $q_A$ to $E_B$; and so on, and similarly for $B_0$, $A_1$, etc.

If a state of the world $\omega^*$ is in sets $A_n$ and $B_n$, for all $n\in\mathbb{N}$, then, at $\omega^*$, Alice assigns probability $q_A$ to $E_B$, Bob is certain that Alice assigns probability $q_A$ to $E_B$, and Alice is certain that Bob is certain that Alice assigns probability $q_A$ to $E_B$, and Alice is certain that Bob is certain that... and so on indefinitely, and also vice versa about Bob assigning probability $q_B$ to $E_A$. In sum, there is common certainty at $\omega^*$ that Alice assigns probability $q_{A}$ to $E_{B}$ and that Bob assigns probability $q_{B}$ to $E_{A}$.

More formally, it is common certainty at a state $\omega^{*}\in\Omega$ that Alice
assigns probability $q_{A}$ to $E_{B}$ and that Bob assigns probability
$q_{B}$ to $E_{A}$ if
\begin{equation} \label{eq:cc-local}
\omega^{*}\in A_{n}\cap B_{n} \quad \forall n\in \mathbb{N}.
\end{equation}
If equation \eqref{eq:cc-local} does not hold for all $n\in \mathbb{N},$, but, instead, only for $n\leq N$ for a certain $N\in\mathbb{N},$ then we talk about $N$th-order mutual certainty.

We now state the classical agreement theorem that will be the basis of our work.  All proofs are given in the Supplementary Material.
\begin{thm} \label{thm:Aumann}
Fix a probability space $(\Omega,\mathcal{E},\mathsf{P})$, where $E_{A}$ and $E_{B}$ are perfectly
correlated events. If it is common certainty at a state $\omega^{*}\in\Omega$
that Alice assigns probability $q_{A}$ to $E_{B}$ and Bob assigns
probability $q_{B}$ to $E_{A},$ then $q_{A}=q_{B}.$
\end{thm}

\subsection{Defining agreement for observers of no-signaling systems}
We now map the classical agreement theorem into the no-signaling framework, in order to explore its applicability beyond the classical realm.

We consider no-signaling distributions, or boxes \cite{popescu_quantum_1994}, of the form
\begin{equation}
\left\{ p(ab|xy)\right\} _{a\in\mathcal{A},b\in\mathcal{B},x\in\mathcal{X},y\in\mathcal{Y}}\,,
\end{equation}
where $x,y$ and $a,b$ are Alice's and Bob's input and output, respectively, and $\mathcal{A},\mathcal{B},\mathcal{X},\mathcal{Y}$ are index
sets, not necessarily of the same size, and which satisfy
\begin{align}
    \sum_a p(ab|xy) &= \sum_a p(ab|x^{\prime} y),\\
    \sum_b p(ab|xy) &= \sum_b p(ab|xy^{\prime}),
\end{align}
 for all $x, x^{\prime}, y, y^{\prime}$. 

A no-signaling box is local if   there exist probability distributions \sloppy $\{p_{\lambda}:\lambda\in\Lambda\}$, $\{p_A(a|x\lambda):(a,x,\lambda)\in\mathcal{A}\times\mathcal{X}\times\Lambda\}$, $\,\{p_B(b|y\lambda):(b,y,\lambda)\in\mathcal{B}\times\mathcal{Y}\times\Lambda\}$, such that
\begin{equation} \label{eq:plocal}
    p(ab|xy)=\sum_{\lambda\in\Lambda} p_{\lambda} \, p_A(a|x\lambda) \, p_B(b|y\lambda),
\end{equation}
for each \sloppy $a,b,x,y,$ where $\Lambda$ is an index set.  
It is quantum if, for each $x,y$, there exist POVMs $\{E_x^a\}_{a\in\mathcal{A}}$, $\{F_y^b\}_{b\in\mathcal{B}}$ and a quantum state $\rho$ such that
\begin{equation}
    p(ab|xy)=\textnormal{tr}\left( E_x^a\otimes F_y^b \rho \right),
\end{equation}
for each $a,b$.
The set of local boxes is strictly included in the set of quantum boxes, which is, in turn, strictly included in the set of no-signaling boxes. No-signaling boxes that are not quantum are termed postquantum.

We now show that we can associate a no-signaling box with any ontological model, and vice versa. Let $\mathcal{A},\mathcal{B},\mathcal{X},\mathcal{Y}$ be index sets. Let $(\Omega,\mathcal{F},\mathsf{P})$ be a probability space, and, for each $x\in\mathcal{X}$, let $\{\mathsf{A}_{x}^{a}:a\in\mathcal{A}\}$ be a partition of the states $\omega\in \Omega$ where $a\in\mathcal{A}$ denotes the partition elements. Similarly, for each $y\in\mathcal{Y}$, let $\{\mathsf{B}_y^b:b\in\mathcal{B}\}$ be another partition of the states $\omega\in \Omega$, where $b\in\mathcal{B}$ denotes the partition elements. According to that, we can understand labels $x\in \mathcal{X}$, $y\in \mathcal{Y}$ as inputs---this information fixes what partition Alice and Bob look at---and $a \in \mathcal{A}$, $b\in \mathcal{B}$ as outputs---this is the information that the observers gain by observing their corresponding partitions. This terminology will shortly become very natural. 

With all the above, \sloppy $\big\lbrace(\Omega,\mathcal{F},\mathsf{P})$,
$\lbrace\mathsf{A}_{x}^{a},\mathsf{B}_{y}^{b}\rbrace_{a,b,x,y}\big\rbrace$ is an ontological model that we now want to  associate to a no-signaling box that reproduces its statistics.  In this ontological model, given inputs $x\in \mathcal{X}$, $y\in \mathcal{Y}$, the probability of obtaining outputs $a \in \mathcal{A}$, $b\in \mathcal{B}$ is given by $\mathsf{P}(\mathsf{A}_x^a \cap \mathsf{B}_y^b)$. This simple observation leads us to construct the no-signaling box
\begin{equation}
p(a,b|x,y):=\mathsf{P}\big(\mathsf{A}_{x}^{a}\cap\mathsf{B}_{y}^{b}\big),\qquad\forall(a,b,x,y)\in\mathcal{A}\times\mathcal{B}\times\mathcal{X}\times\mathcal{Y}.\label{OntToBox}
\end{equation}
It can be verified that the probabilities $p$ are non-negative, normalized, and no-signaling.

The converse process of finding  an ontological model starting from a no-signaling box can be also performed, as we show in Supplementary Note 2. Remarkably, this can be accomplished even in the case in which the no-signaling box is non-local, obtaining an ontological model with a quasi-probability measure  (i.e., one which allows for negative values, which still sum to 1) instead of standard positive probabilities \cite{abramsky_sheaf-theoretic_2011}. (The appearance of quasi-probabilities here should not surprise the reader. In fact, one cannot hope to obtain  ontological models with only non-negative probabilities for post-classical no-signaling boxes, since this would provide local hidden-variable models that contradict, for instance, Bell's theorem. In any case, the use of this mathematical tool has been well rooted in the study of quantum mechanics since its origins---see \cite{ferrie_quasi-probability_2011} for a nice review of this subject.) This makes it possible to translate  results from one framework to the other, something that might be of interest in order to establish further connections between epistemics and quantum theory. However, from now on, we focus on no-signaling boxes and leave this digression aside in the rest of the main text.   

With the  association   between ontological models and no-signaling boxes in mind, we next define common certainty of disagreement for no-signaling boxes. The idea is to reinterpret the definitions in Section \ref{sec:classical-agreement} in this latter setting. 

We first propose a meaning for the events of interest (previously identified as $E_A$, $E_B$) in the present setting. Now, these events correspond to some set of outcomes, given that the no-signaling box was queried with some particular inputs. For the sake of concreteness, we fix these inputs to be $x=1,\,  y=1$ and the outcomes of interest to be $a=1$, $b=1$. This motivates us to consider the events  $F_A=\lbrace(1,b,1,y):b\in \mathcal{B}, y\in \mathcal{Y}\rbrace$   (on Alice's side) and  $F_B = \lbrace(a,1,x,1):a\in \mathcal{A}, x\in \mathcal{X}\rbrace$   (on Bob's side). Then, we say that $F_A$ and $F_B$ are perfectly correlated when
\begin{equation} \label{eq:perfcorr}
p(a,b|x=1,y=1)=0 \textnormal{ for all } a\neq b.
\end{equation}

Given this, we assume that the observers actually conduct their measurements according to some partitions. Again, for concreteness, let us assume that those partitions are the ones associated with inputs $x=0$, $y=0$. These inputs take on the role of partitions $\mathcal{P}_A$, $\mathcal{P}_B$ in the ontological model picture. The outputs obtained from these measurements are the no-signaling box analogue to the events $\mathcal{P}_A(\omega)$, $\mathcal{P}_B(\omega)$. In order to make the following expressions more concrete, we assume, when $x=0$, $y=0$ are input, that the outputs obtained are $a=0$ and $b=0$, respectively. 

Therefore, given the perfectly correlated events $F_A,\, F_B$ and numbers $q_A,\, q_B \in [0,1]$, we define the sets
\begin{align}
\alpha_0 &=\left\{ a\in \mathcal{A} :p(b=1|a,x=0,y=1)=q_{A}\right\} \,,\\
\beta_0 &= \left\{ b\in \mathcal{B} :p(a=1|b,x=1,y=0)=q_{B}\right\} \, ,
\end{align}
and, for all $n\geq 0,$
\begin{align}
\alpha_{n+1} &= \left\{ a \in \alpha_n :p(B_n|a,x=0,y=0)=1\right\} \, ,\\
\beta_{n+1}  &= \left\{ b \in \beta_n :p(A_n |b,x=0,y=0)=1\right\} \, ,
\end{align}
where 
\begin{align}
A_n &= \alpha_n\times \mathcal{B} \times \mathcal{X} \times \mathcal{Y}\,,\\
B_n &= \mathcal{A} \times \beta_n \times \mathcal{X} \times \mathcal{Y}\,.
\end{align}
By analogy with the sets in equations \eqref{eq:A0B0}-\eqref{eq:A0B0_end}, the set $\alpha_0$ consists of Alice's outcomes such that she assigns probability $q_A$ to $F_B$  upon seeing that outcome, having input $x=0$, if she assumes that Bob inputs $y=1$. The set $\beta_1$ is the set of Bob's outcomes such that he is certain that Alice assigns probability $q_A$ to $F_B$  upon seeing that outcome, having input $y=0$, if he assumes that Alice inputs $x=1$, and so on, and similarly for $\beta_0$, $\alpha_1$, etc.

With this in mind, we can build a chain of mutual certainties, in a manner similar to the classical agreement theorem. If the event $(a=0, b=0, x=0, y=0)$ is in the sets $A_n$ and $B_n$, for all $n\in\mathbb{N}$, then, if Alice and Bob both input 0 and get output 0, we have: Alice assigns probability $q_A$ to $F_B$ (assuming that Bob input $y=1$), Bob is certain that Alice assigns probability $q_A$ to $F_B$, Alice is certain that Bob is certain that... and so on indefinitely, and vice versa. That is, there is common certainty at $(a=0, b=0, x=0, y=0)$ that Alice assigns probability $q_A$ to $F_B$ and that Bob assigns probability $q_B$ to $F_A$.

More formally, there is common certainty about the event that Alice assigns probability $q_A$ to $F_B$ and that Bob assigns probability $q_B$ to $F_A$  if 
\begin{equation}\label{def:ccd_in_NS}
(a=0,b=0,x=0,y=0) \in A_n \cap B_n \quad \forall n\in \mathbb{N}. 
\end{equation}
There is common certainty of disagreement if, in addition, $q_A \neq q_B$.

Notice the relationship between this definition and the previous one: the $\omega^*$ in equation \eqref{eq:cc-local}, at which the disagreement occurred, fixed the partition elements that Alice and Bob observed. Here, disagreement occurs at the inputs and outputs $(a=0, b=0, x=0, y=0)$ that the observers obtain.

 In complete generality, we can also consider disagreement at arbitrary inputs and outputs $(a,b,x,y)$. For that, one just has to consider the appropriate changes in the preceding paragraphs. The only case that might seem different is that in which Alice and/or Bob's input is the same as that corresponding to the event of interest; that is, $x=1$ and/or $y=1$. This is allowed but uninteresting: It is easy to see that the fact that $F_A$ and $F_B$ are perfectly correlated precludes the possibility of common certainty of disagreement in this case. Therefore, for the sake of concreteness, we fix  $x$, $y$ both different from $1$ and, in particular, equal to $0$.   

The following is a rephrasing of the classical agreement theorem: 
\begin{thm}\label{cor:Aumann}
	Suppose that Alice and Bob share a local no-signaling box with underlying probability
	distribution $p$. Let $q_A,q_B \in [0,1]$, and let
	\begin{align}
	    p(b=1|a=0,x=0,y=1) &= q_A,\\
	    p(a=1|b=0,x=1,y=0) &= q_B.
	\end{align}
	If $q_{A}$ and $q_{B}$ are common certainty between the observers,	then $q_{A}=q_{B}.$
\end{thm}

In Supplementary Note 3, we give a standalone proof of this result.
 Moreover, using the above correspondence between ontological models and classical no-signaling boxes, one can prove that the notions of common certainty of disagreement in Theorems \ref{thm:Aumann} and \ref{cor:Aumann} are equivalent.
  We now ask whether this theorem holds in quantum and no-signaling settings.

\subsection{Observers of quantum systems cannot agree to disagree}

Given the mapping exhibited above, as well as the restatement of the agreement theorem for local boxes, it is now natural to ask whether the theorem holds when dropping the locality constraint. 

We address this question by exploring it in the broader no-signaling setting. First, we establish that, in general, observers of no-signaling systems can agree to disagree about perfectly correlated events, and we give explicit examples of disagreeing no-signaling distributions. In the particular case of two inputs and two outputs, we characterize the distributions that give rise to common certainty of disagreement. One might think that the fact that observers of no-signaling systems can agree to disagree is a direct consequence of the multitude of uncertainty relations in quantum mechanics, all of which put a limit on the precision with which the values of incompatible
observables can be measured and which have even been linked to epistemic inconsistencies in quantum mechanics \cite{frauchiger_quantum_2018}. Somewhat surprisingly, our next finding shows that this is not the case. We show that disagreeing no-signaling distributions of two inputs and two outputs cannot be quantum---i.e., the agreement theorem holds for observers of quantum systems in this setting. Then, we go beyond this restriction and show that any disagreeing no-signaling distribution with more than two inputs or outputs induces a disagreeing distribution with two inputs and outputs. Since the agreement theorem holds for observers of quantum systems sharing distributions of two inputs and outputs, it does so for more general distributions too. Thus, even if quantum mechanics features uncertainty relations, this does not apply to observers' estimates of perfectly correlated events.

We first present the following theorem in which the no-signaling box has two inputs and two outputs, but we will show in Theorem~\ref{thm:generalization} that the result is fully general. In place of ``common certainty of disagreement about the event that Alice assigns probability $q_A$ to  $F_B = \lbrace(a,1,x,1):a\in \mathcal{A}, x\in \mathcal{X}\rbrace$   and Bob assigns probability $q_B$ to  $F_A=\lbrace(1,b,1,y):b\in \mathcal{B}, y\in \mathcal{Y}\rbrace$  , at event $(0, 0, 0, 0)$,'' we simply say ``common certainty of disagreement.''

\begin{thm} \label{thm:2in2outcc}
A two-input two-output no-signaling box gives rise to common certainty of disagreement if and only if it takes the form of Table \ref{tab:nsccd}.
\end{thm}
\begin{table}[ht]
\centering
\begin{tabular}{|c|c|c|c|c|}
\hline 
\label{isabox}$xy\backslash ab$  & 00  & 01  & 10  & 11\tabularnewline
\hline 
\hline 
00  & $r$  & 0  & 0  & $1-r$ \tabularnewline
\hline 
01  & $r-s$  & $s$  & $-r+t+s$  & $1-t-s$\tabularnewline
\hline 
10  & $t-u$  & $u$  & $r-t+u$  & $1-r-u$\tabularnewline
\hline 
11  & $t$  & 0  & 0  & $1-t$\tabularnewline
\hline 
\end{tabular}
\caption{Parametrization of two-input two-output no-signaling boxes with common certainty of disagreement. Here, $r,s,t,u  \in [0,1]$ are such that all the entries of the box are non-negative, $r>0$, and $s-u\neq r-t.$}
\label{tab:nsccd}
\end{table}
While some no-signaling distributions can exhibit common certainty of disagreement, we find that probability distributions arising in quantum mechanics do satisfy the agreement theorem. This is surprising: It is well-known that a given measurement of a quantum system (say, that corresponding to the input $x,y=0$) need not offer any information about the outcome of an incompatible measurement on the same system (say, $x,y=1$). However, some consistency remains: common certainty of disagreement is impossible, even for incompatible measurements.

\begin{thm} \label{thm:ccd-not-quantum}
No two-input two-output quantum box can give rise to common certainty of disagreement.
\end{thm}
The proof follows by deriving a contradiction from Tsirelson's theorem~\cite{cirelson_quantum_1980}.

We have seen that no two-input two-output quantum box can give rise to common certainty of disagreement. We now lift the restriction on the number of inputs and outputs and show that no quantum box can give rise to common certainty of disagreement.

 First, as we now show, the proof for two inputs and outputs does not require common certainty, but only first-order mutual certainty. By observing the definitions of the sets $\alpha_n, \beta_n$, one can see that $\alpha_n=\alpha_1$ and $\beta_n=\beta_1$ for all $n\geq 1$. This means that first-order mutual certainty implies common certainty, and, therefore, first-order certainty suffices to characterize the no-signaling box that displays common certainty of disagreement.

As the number of outputs grows, first-order mutual certainty is no longer sufficient. However, since the number of outputs is always finite, there exists an $N\in\mathbb{N}$ such that $\alpha_n=\alpha_N$ and $\beta_n=\beta_N$ for all $n\geq N$. Since $\alpha_{n+1}\subseteq \alpha_n \, \forall n$, and similarly for $\beta$, the sets $\alpha_N, \beta_N$ are the smallest sets of outputs for which the disagreement occurs. Because of this, any $(a,b,x,y)$ that belongs to $A_N \cap B_N$ will also belong to $A_n \cap B_n$ for all $n$; that is, $N$th-order mutual certainty implies common certainty. So, for any finite no-signaling box, one needs only $N$th-order mutual certainty to characterize it. As the number of outputs grows unboundedly, one needs common certainty to hold \cite{geanakoplos_we_1982}. These observations will be relevant to extending Theorem \ref{thm:ccd-not-quantum} beyond two inputs and outputs.

\begin{thm} \label{thm:generalization}
No quantum box can give rise to common certainty of disagreement.
\end{thm}

To prove the theorem, we show that any no-signaling box with common certainty of disagreement induces a two-input two-output no-signaling box with the same property. Thus, if there existed a quantum system that could generate the bigger box, it could also generate the smaller box. Then, Theorem \ref{thm:2in2outcc} implies that no quantum box can give rise to common certainty of disagreement.

\subsection{Observers of quantum systems cannot disagree singularly}

Next, we ask if observers of no-signaling quantum systems can disagree in other ways. We define a new notion of disagreement, which we call singular disagreement, by removing the requirement of common certainty and, instead, imposing $q_{A}=1,$ $q_{B}=0$. We ask whether this new notion holds for observers of classical, quantum, and no-signaling systems. We find the same pattern as before: Singular disagreement does not hold for observers of classical or quantum systems, but can occur in no-signaling settings, where we characterize the distributions that feature it.

Suppose that Alice and Bob share a no-signaling box. As before, Alice assigns probability $q_A$ to the event $F_B$, and Bob assigns probability $q_B$ to the event $F_A$. Suppose that this happens at event $(a=0,b=0,x=0,y=0)$. If $q_A=1$ and $q_B=0$, these probabilities differ maximally: Alice is certain that $F_B$ happens, while Bob is certain that $F_A$ does not happen. If, in addition, $F_A$ and $F_B$ are perfectly correlated, then there is singular disagreement at $(0,0,0,0)$.
This time, there is no need for chained certainties---we just require that Alice's and Bob's assignments differ maximally.

More formally, there is singular disagreement about the probabilities assigned by Alice and Bob to perfectly correlated events  $F_B = \lbrace(a,1,x,1):a\in \mathcal{A}, x\in \mathcal{X}\rbrace$   and  $F_A=\lbrace(1,b,1,y):b\in \mathcal{B}, y\in \mathcal{Y}\rbrace$,   respectively, at event $(0, 0, 0, 0)$ if it holds that 
\begin{equation}
q_{A}=1,\;q_{B}=0.\label{eq:singdiscondt}
\end{equation}

Similarly to the previous section, we refer to the above definition simply as ``singular disagreement.''

We restrict ourselves first to boxes of two inputs and outputs and show that local boxes cannot exhibit singular disagreement. Then, we characterize the no-signaling boxes that do satisfy singular disagreement and show they cannot be quantum. Finally, we generalize to boxes of any number of inputs and outputs.

\begin{thm} \label{thm:sd-local}
\label{prop:loc-sd}There is no local two-input two-output box that gives rise to singular disagreement. 
\end{thm}

It turns out that singular disagreement induces a Hardy paradox \cite{hardy_quantum_1992} in the system; therefore it cannot be local.

We now lift the local restriction and characterize the no-signaling boxes in which singular disagreement occurs.
\begin{thm} \label{thm:singdis}
A two-input two-output no-signaling box gives rise to singular disagreement if and only if it takes the form of Table \ref{tab:nssd}.
\end{thm}
\begin{table}[ht!]
\centering
\begin{tabular}{|c|c|c|c|c|}
\hline 
$xy\backslash ab$  & 00  & 01  & 10  & 11\tabularnewline
\hline 
\hline 
00  & $s$  & $t$  & $1-s-u-t$  & $u$\tabularnewline
\hline 
01  & 0  & $s+t$  & $r$  & $1-s-t-r$\tabularnewline
\hline 
10  & $1-u-t$  & $u+t+r-1$  & 0  & $1-r$\tabularnewline
\hline 
11  & $r$  & 0  & 0  & $1-r$\tabularnewline
\hline 
\end{tabular}
\caption{Parametrization of two-input two-output no-signaling boxes with singular disagreement. Here, $r,\,s,\, t,\, u,\, \in [0,1]$ are such that all the entries of the box are non-negative, $s>0$, and $s + t \neq 0$ and $u+t\neq 1$.}
\label{tab:nssd}
\end{table}

The definition of singular disagreement gives rise to the zeros in the second and third rows of Table \ref{tab:nssd}, while the zeros in the bottom row come from the perfectly correlated outputs on inputs $x=y=1$.

However, singular disagreement cannot arise in quantum systems. This is another way in which quantum mechanics provides some consistency between (possibly incompatible) measurements, just as in the case of common certainty of disagreement. 
\begin{thm}
No  two-input  two-output quantum box  can  give  rise  to  singular disagreement.
\end{thm}
By Ref. \cite{rai_geometry_2019}, the boxes of Theorem \ref{thm:singdis} are either local or postquantum. But Theorem \ref{thm:sd-local} implies they cannot be local.

Finally, the above results can be generalized to any finite box:
\begin{thm}\label{thm:sd-not-quantum}
No quantum box  can  give  rise  to  singular disagreement.
\end{thm}
The proof is very similar to that of Theorem \ref{thm:generalization}.

Additionally, the PR box \cite{popescu_quantum_1994} satisfies both Theorems \ref{thm:2in2outcc} and \ref{thm:singdis}, and this makes it an example of both kinds of disagreement.

\section{Discussion}

We have defined two notions of disagreement inspired by notions from epistemics. Both notions of disagreement imply immediate tests for new theories---namely, the tables in Theorem \ref{thm:2in2outcc} and Theorem \ref{thm:singdis}. These tests are very general in the sense that they are based only on the capability (or not) of a theory to realize undesirable correlations between non-communicating parties. Also, both principles have their roots in epistemics, with common certainty of disagreement closer to Aumann's original idea and singular disagreement permitting a simpler description.

These two notions of disagreement are compatible in that it is possible to find examples displaying both kinds of disagreement at once. Strikingly, a prime such example is the Popescu-Rohrlich box \cite{popescu_quantum_1994}, establishing that it is an extremal resource both in the sense of being an extreme point of the polytope of no-signaling distributions and in the sense of inducing the strongest possible disagreement between two parties.

On a speculative note, we suggest that it would be also very interesting to explore the application of the concepts introduced in this paper to practical tasks in which consensus between parties plays a role, such as the coordination of the action of distributed agents or the   
verification of distributed computations. See \cite{book_ReasoningAboutKnowledge} for some specific connections along these lines in the classical case.

Further work could be dedicated to constructing physical paradoxes arising from the possibility of agreeing to disagree. The examples in the literature (see Refs. \cite{geanakoplos_sebenius_1983,milgrom_stokey_1982,aumann_brandenburger_1995}), while very practical, might, to some communities, be considered less appealing than deeper physical consequences. These could be explored by exploiting the newly built bridges between quantum mechanics and epistemics.

Our results suggest that agreement can be used to design experiments to test the behaviour of Nature. In experimental settings, noise is unavoidable. Adding white noise to the boxes in Table \ref{tab:nsccd} and Table \ref{tab:nssd} (both of which lie in quantum voids) would mean that the zeros in the boxes now become small but nonzero parameters.  Robustness of quantum voids to this type of noise can be deduced from the closure of the set of quantum 2-input 2-output correlations \cite{goh_2018}. This already covers an approximate version of singular disagreement.   Another future direction to explore would be defining notions of approximate common certainty of disagreement.

We contend that agreement between observers could be a convenient principle for testing the consistency of new postquantum theories. Our results yield a clear parameterization 
of the set of the probability distributions that allow observer disagreement. This set is easy to work with,
thanks to its restriction to two observations and two outcomes per observer.
If a new theory can be used to generate such a distribution, this might raise a red flag, since this 
theory violates a reasonable, intuitive and, importantly, testable property that quantum mechanics satisfies.

Therefore, another future direction would be to study disagreement in theories that generalize quantum theory. For instance, one could consider almost quantum correlations \cite{navascues_almost-quantum_2015}, which is a set of correlations strictly larger than those achievable by measuring quantum states but that were designed to satisfy all physical principles previously proposed in the literature.
Almost quantum correlations are well characterized in terms of no-signaling boxes, so a natural question is whether they permit common certainty of disagreement or singular disagreement. For common certainty of disagreement, a straightforward modification to the proof of Theorem \ref{thm:ccd-not-quantum} shows that this phenomenon cannot arise under almost quantum correlations.  For singular disagreement, the simple characterization of almost quantum correlations in terms of a semidefinite program \cite{navascues_2008} makes it possible to search numerically for almost quantum boxes displaying singular disagreement.  Using this approach, we have found numerical evidence that singular disagreement, too, cannot arise under almost quantum correlations. Hence, our principles would appear to make it possible to identify additional features that almost quantum correlations share with quantum theory, indicating a new sense in which these generalized correlations are still reasonable.

Furthermore, one of the aims of studying physical principles is to find out whether quantum theory can be deduced from a set of such principles. Our findings open new directions for the exploration of such a programme. 
However, there are two main challenges in doing so.
 First, in order to assess this, it would be desirable to compare this principle to existing ones in the literature \cite{popescu_quantum_1994,clifton_characterizing_2003,pawlowski_information_2009,sun_no_2019,yan_quantum_2013}. However, our principle is markedly different in nature---this novelty being precisely the main obstacle. Second, Ref. \cite{gallego_quantum_2011} shows that bipartite principles are not enough to capture the set of quantum correlations. This highlights the interest of finding a multipartite extension for our principle. 
In the classical case, multipartite extensions have already been considered (see, e.g. \cite{parikh_communication_1990}). Studying similar extensions in the quantum and no-signaling settings is an interesting avenue for future work.

Finally, we compare our results with others in the physics literature that call on Aumann's theorem. 
Refs. \cite{khrennikov_quantum_2015,khrennikov_possibility_2014} examine Aumann's theorem when observers
are assumed to use Born's rule to update probabilities. The
authors conclude that Aumann's theorem does not hold for this type
of observer. Instead, our setting assumes that the observers are macroscopic
and merely share a quantum state or a no-signaling box. Our setting
is appealing in quantum information for its applications to communication
complexity, cryptography, teleportation, and many other scenarios. Ref. \cite{abramsky_non-locality_2019} introduces a different
notion of disagreement in a no-signaling context. Here, disagreement concerns pieces of information about some variables,
and agreement refers to consistency in the information provided about
the variables. Hence, it is unrelated to the epistemic notion of disagreement
in Aumann's theorem or in the present work.  Disagreement is also a feature of Ref. \cite{frauchiger_quantum_2018}. However, there, one of the observers ignores the fact that a certain measurement has taken place, while another observer takes this into account, and disagreement arises over the outcomes of a different measurement. In our work, the events that take place are the same according to all observers.

\paragraph{Data availability.} Data sharing not applicable to this article since no datasets were generated or analysed during the current study.


\paragraph{Acknowledgements.} 
We thank Eduardo Zambrano and Valerio Scarani for helpful communications. We also thank our audiences at Perimeter Institute, Universidad Complutense de Madrid, QuSoft, and IQOQI in Vienna for their comments and illuminating discussions. We are grateful to Elie Wolfe for bringing Ref. \cite{rai_geometry_2019} to our attention, Alex Pozas-Kerstjens for help with the numerical search, Miguel Navascu\'es for spotting a mistake in a previous version of the paper, Ronald de Wolf, Serge Fehr, Rena Henderson,  Laura Man\v cinska, and Peter van Emde Boas.
We acknowledge financial support from NYU Stern School of Business, NYU Shanghai,
and J.P. Valles (A.B.); MTM2017-88385-P funded by MCIN/AEI/ 10.13039/501100011033 and by ``ERDF A way of making Europe'', QUITEMAD-CM P2018/TCS4342 (Comunidad de Madrid), SEV-2015-0554-16-3 funded by MCIN/AEI/ 10.13039/501100011033, and PID2020-113523GB-I00 funded by MCIN/AEI/ 10.13039/501100011033 and by ``ERDF A way of making Europe'' (P.C-T.);
ERC Consolidator Grant GAPS (No. 648913) and MTM2017-83262-C2-1-P funded by MCIN/AEI/ 10.13039/501100011033 and by ``ERDF A way of making Europe'' (A.M.K.); Deutsche Bundesbank (P.L.M.);
MTM2014- 54240-P funded by MCIN/AEI/ 10.13039/501100011033 and by ``ERDF A way of making Europe'', QUITEMAD-CM P2018/TCS4342 (Comunidad de Madrid), ICMAT Severo Ochoa project SEV-2015-0554 funded by MCIN/AEI/ 10.13039/501100011033,
and PID2020-113523GB-I00 funded by MCIN/AEI/ 10.13039/501100011033 and by ``ERDF A way of making Europe'', "Acci\'on financiada por la Comunidad de Madrid en el marco del Convenio Plurianual con la Universidad Polit\'ecnica de Madrid en la l\'inea de actuaci\'on Programa de Excelencia para el Profesorado Universitario" (G.S.). Contreras-Tejada is
grateful for the hospitality of Perimeter Institute, where part of this work was carried out. Research at Perimeter
Institute is supported, in part, by the Government of Canada through the Department of Innovation, Science and
Economic Development Canada and by the Province of Ontario through the Ministry of Economic
Development, Job Creation and Trade.

\paragraph{Author contributions.}
P.C.-T. and G.S. contributed equally to this work. P.C.-T., G.S., A.M.K., A.B. and P.L.M. contributed to designing the ideas, performing the calculations, analyzing the results, and writing the manuscript.

\paragraph{Competing interests.} The authors declare no competing interests.



\nocite{abramsky_operational_2014}
\bibliographystyle{plain}
\bibliography{abo-bibliography.bib}

\cleardoublepage1

\setcounter{thm}{0}
\setcounter{equation}{0}
\setcounter{table}{0}
\begin{center}
\large{\textbf{Supplementary material}}
\end{center}
\appendix

We prove the results in the main text. We also provide the statements and some definitions in the interest of readability.

\section*{Supplementary Note 1 - Proof of Theorem 1}

\begin{thm}\label{thm:Aumann-app}
Fix a probability space $(\Omega,\mathcal{E},\mathsf{P})$ where $E_{A}$ and $E_{B}$ are perfectly
correlated events. If it is common certainty at a state $\omega^{*}\in\Omega$
that Alice assigns probability $q_{A}$ to $E_{B}$ and Bob assigns
probability $q_{B}$ to $E_{A},$ then $q_{A}=q_{B}.$
\end{thm}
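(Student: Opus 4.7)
My plan follows the classical Aumann agreement template, adapted to the fact that the two agents reason about different (though perfectly correlated) events. I model each agent's private information by a $\sigma$-algebra $\mathcal{F}_A,\mathcal{F}_B \subseteq \mathcal{E}$, and identify Alice's probability of $E_B$ with the conditional expectation $\mathsf{P}(E_B\mid\mathcal{F}_A)$ and Bob's probability of $E_A$ with $\mathsf{P}(E_A\mid\mathcal{F}_B)$. The statement then reads: if the event
\[
F \;=\; \{\mathsf{P}(E_B\mid\mathcal{F}_A) = q_A\}\;\cap\;\{\mathsf{P}(E_A\mid\mathcal{F}_B)=q_B\}
\]
is commonly certain at $\omega^*$, then $q_A=q_B$.

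The first step is to unpack common certainty at $\omega^*$ into the usual ``evident event'' characterization: there exists $C\in\mathcal{F}_A\cap\mathcal{F}_B$ with $\omega^*\in C$, $\mathsf{P}(C)>0$, and $C\subseteq F$. This is the standard probabilistic analogue of a self-evident / public event and is where the assumption ``common certainty'' is used; I would state it as a lemma or invoke whichever version is recorded earlier in the paper.

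The second step exploits measurability of $C$ in each agent's information. Because $C\in\mathcal{F}_A$ and $\mathsf{P}(E_B\mid\mathcal{F}_A)\equiv q_A$ on $C$, integrating the defining property of conditional probability over $C$ gives
\[
\mathsf{P}(E_B\cap C) \;=\; \int_C \mathsf{P}(E_B\mid\mathcal{F}_A)\,d\mathsf{P} \;=\; q_A\,\mathsf{P}(C).
\]
Symmetrically, $C\in\mathcal{F}_B$ and $\mathsf{P}(E_A\mid\mathcal{F}_B)\equiv q_B$ on $C$ yield $\mathsf{P}(E_A\cap C)=q_B\,\mathsf{P}(C)$.

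The third and final step brings in perfect correlation. By assumption, $\mathsf{P}(E_A\triangle E_B)=0$, so $\mathsf{P}(E_A\cap C)=\mathsf{P}(E_B\cap C)$. Combining with the two displays above and dividing by $\mathsf{P}(C)>0$ delivers $q_A=q_B$. The only step I anticipate needing genuine care is the first one: making precise the common-certainty hypothesis and extracting the evident event $C$ on which both posterior statements literally hold. Once $C$ is in hand, the rest is a routine application of the tower property and of perfect correlation.
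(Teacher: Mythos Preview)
Your argument is the classical Aumann template and matches the paper's strategy: pass from the common-certainty hypothesis to a set on which both posteriors are pinned, integrate, and invoke perfect correlation. The paper carries this out concretely in the finite-partition setting by running the belief hierarchy $A_n,B_n$ until it stabilises at $A_N,B_N$, showing $\mathsf{P}(B_N\mid A_N)=1$ and $\mathsf{P}(E_B\mid A_N)=q_A$ via convex combinations over Alice's cells (hence $\mathsf{P}(E_B\mid A_N\cap B_N)=q_A$), and symmetrically for Bob, so that perfect correlation forces $\mathsf{P}(E_A\cap E_B\mid A_N\cap B_N)$ to equal both $q_A$ and $q_B$.

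One point deserves the care you already flag. Your first step asks for an evident event $C\in\mathcal{F}_A\cap\mathcal{F}_B$ with $\omega^*\in C\subseteq F$; that is the \emph{common-knowledge} characterisation. Under \emph{common certainty} (iterated probability-$1$ belief), what one actually obtains is the pair $A_N\in\mathcal{F}_A$, $B_N\in\mathcal{F}_B$ with $\omega^*\in A_N\cap B_N$ and $\mathsf{P}(A_N\triangle B_N)=0$: they agree only up to a null set and need not lie in the meet $\sigma$-algebra. This is precisely why the paper works with $A_N$ and $B_N$ separately and reaches $A_N\cap B_N$ through $\mathsf{P}(B_N\mid A_N)=1$ rather than by positing a single public $C$. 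Your integration step survives this adjustment unchanged---integrate over $A_N$ for Alice and over $B_N$ for Bob, then use $\mathsf{P}(A_N\triangle B_N)=0$ together with $\mathsf{P}(E_A\triangle E_B)=0$---so the proof is sound once the lemma in your first step is stated in its common-certainty (rather than common-knowledge) form.
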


\begin{proof}
Since $\Omega$ is finite, there is a finite $N\in\mathbb{N}$ such
that, for all $n\geq N,$ $A_{n+1}=A_{n}$ and $B_{n+1}=B_{n}.$ From
the definition of $A_{N+1},$ we have that
\begin{equation}
\mathsf{P}(B_{N}|\mathcal{P}_{A}(\omega))=1\ \forall\omega\in A_{N}.\label{eq:A_N}
\end{equation}
Now, $A_{N}$ is a union of partition elements of $\mathcal{P}_{A},$
i.e., $A_{N}=\bigcup_{i\in I}\pi_{i}$ where each $\pi_{i}\in\mathcal{P}_{A}$
and $I$ is a finite index set. From Equation (\ref{eq:A_N}), we
have
\begin{equation}
\mathsf{P}(B_{N}|\pi_{i})=1\ \forall i\in I.
\end{equation}
Since $\mathsf{P}(B_{N}|A_{N})$ is a convex combination of $\mathsf{P}(B_{N}|\pi_{i})$
for $i\in I,$ we must have
\begin{equation}
\mathsf{P}(B_{N}|A_{N})=1.\label{eq:probBNAN}
\end{equation}

Now, since $A_{N}\subseteq A_{0},$ then $\mathsf{P}(E_{B}|\pi_{i})=q_{A}$
for all $i\in I$ too. Using a convex combination argument once more,
this entails that
\begin{equation}
\mathsf{P}(E_{B}|A_{N})=q_{A}.\label{eq:probEBAN}
\end{equation}
Equations (\ref{eq:probBNAN}) and (\ref{eq:probEBAN}) together imply
that
\begin{equation}
\mathsf{P}(E_{B}|A_{N}\cap B_{N})=q_{A}.
\end{equation}

But events $E_{A}$ and $E_{B}$ are perfectly correlated, so that 
\begin{equation}
\mathsf{P}(E_{A}\cap E_{B}|A_{N}\cap B_{N})=q_{A},
\end{equation}
as well.

Running the parallel argument with $A$ and $B$ interchanged, we
obtain
\begin{equation}
\mathsf{P}(E_{A}\cap E_{B}|A_{N}\cap B_{N})=q_{B},
\end{equation}
which implies that $q_{A}=q_{B}.$
\end{proof}

\section*{Supplementary Note 2 - Relating no-signalling boxes and ontological models}

The following result already appeared elsewhere \cite{abramsky_sheaf-theoretic_2011}. For convenience, we write down the proof in terms of elementary linear algebra without relying on the sheaf-theoretic language of Ref. \cite{abramsky_sheaf-theoretic_2011}.


\noindent\textit{Proposition A1.}{ Given any no-signaling box  $\{p(ab|xy):(a,b,x,y)\in\mathcal{A}\times\mathcal{B}\times\mathcal{X}\times\mathcal{Y}\},$ there is a (non-unique) corresponding ontological model whose probabilities assigned to the states of the world are not necessarily non-negative.
}

\begin{proof}
Let  $\{p(ab|xy):(a,b,x,y)\in\mathcal{A}\times\mathcal{B}\times\mathcal{X}\times\mathcal{Y}\}$  be a no-signaling box. We construct its associated ontological model $\big\lbrace(\Omega,\mathcal{F},\mathsf{P})$,
$\lbrace\mathsf{A}_{x}^{a},\mathsf{B}_{y}^{b}:(a,b,x,y)\in \mathcal{A}\times\mathcal{B}\times\mathcal{X}\times\mathcal{Y} \rbrace \big\rbrace\,.$ We provide the proof for $a,b,x,y\in\left\{ 0,1\right\} \,$ for ease of notation, but the generalization to more inputs and outputs is immediate.

To construct the ontological model, we postulate
the existence of a set of states 
\begin{equation}
\omega_{a_{0}a_{1}b_{0}b_{1}}
\end{equation}
with quasi-probabilities 
\begin{equation}
\mathsf{P}_{a_{0}a_{1}b_{0}b_{1}}\equiv\mathsf{P}(\omega_{a_{0}a_{1}b_{0}b_{1}}).
\end{equation}
Each state corresponds to an \textit{instruction set} \cite{abramsky_operational_2014}, i.e., the state where Alice outputs $a_{0}$ on input $x=0$ and $a_{1}$ on input $x=1\,,$ and Bob outputs $b_{0}$ on input $y=0$ and $b_{1}$ on input $y=1.$ Then, each  $\mathsf{P}_{a_{0}a_{1}b_{0}b_{1}}$ is the quasi-probability of the corresponding instruction set. Of course, if the given box is post-classical, not all of these quasi-probabilities will be non-negative. In fact, in principle it need not even be guaranteed that one can find a quasi-probability distribution over these states. But we will use the probability distribution of the inputs and outputs of the given no-signaling box to derive a linear system of equations over the quasi-probabilities, and show that it does have a solution.

There are 16 states
in total, as there are two possible outputs for each of the 4 inputs
($|\mathcal{A}|^{|\mathcal{X}|}\cdot|\mathcal{B}|^{|\mathcal{Y}|}$
in general). Then, each partition corresponds to a set of states
as follows: 
\begin{equation}
\begin{aligned}\mathsf{A}_{x}^{a} & =\left\{ \omega_{a_{0}a_{1}b_{0}b_{1}}:a_{x}=a\right\} \\
\mathsf{B}_{y}^{b} & =\left\{ \omega_{a_{0}a_{1}b_{0}b_{1}}:b_{y}=b\right\} \,.
\end{aligned}
\label{eq:defA,B}
\end{equation}

We associate the probabilities $p(ab|xy)$
of the no-signaling box to the probabilities $\mathsf{P}(\mathsf{A}_{x}^{a}\cap\mathsf{B}_{y}^{b})$
of each intersection of partitions, for each input pair
$(x,y)$ and output pair $(a,b)\,.$ This gives rise to a set of equations
for the probabilities $\mathsf{P}_{a_{0}a_{1}b_{0}b_{1}}.$
Indeed, the probability of each intersection is given by 
\begin{equation}
\mathsf{P}(\mathsf{A}_{x}^{a}\cap\mathsf{B}_{y}^{b})=\sum_{a_{\bar{x}}\,,b_{\bar{y}}}\mathsf{P}_{a_{x}a_{\bar{x}}b_{y}b_{\bar{y}}}\label{eq:prob-intersection}
\end{equation}
where we denote the output corresponding to the input that is not
$x$ as $a_{\bar{x}}\,,$ and similarly for $b_{\bar{y}}\,,$ and
so we have, for each $a,b,x,y\,,$ 
\begin{equation}
\sum_{a_{\bar{x}}\,,b_{\bar{y}}}\mathsf{P}_{a_{x}a_{\bar{x}}b_{y}b_{\bar{y}}}=p(ab|xy)\,.\label{eq:linsyst}
\end{equation}
Since there are 16 values of $p(ab|xy)$ in the 2-input 2-output no-signaling
box, we arrive at 16 equations ($|\mathcal{A}|\times|\mathcal{B}|\times|\mathcal{X}|\times|\mathcal{Y}|$
in general). Of course, there are some linear dependencies between
the equations, but we will show that the system still has a solution.

The system of equations can be expressed as 
\begin{equation}
MP=C
\end{equation}
where $M$ is the matrix of coefficients, $P$ is the vector of probabilities
$\mathsf{P}_{a_{0}a_{1}b_{0}b_{1}}$ and $C$ is the vector of independent
terms $p(ab|xy)\,.$ The system has a solution (which is not necessarily
unique) if and only if 
\begin{equation}
\textnormal{rank}(M)=\textnormal{rank}(M|C)\,.
\end{equation}
Since the rank of a matrix is the number of linearly independent rows,
it is trivially true that 
\begin{equation}
\textnormal{rank}(M)\leq\textnormal{rank}(M|C)\,,
\end{equation}
as including the independent terms can only remove some relations
of linear dependence, not add more. Equivalently, the number of relations
of linear dependence of $M|C$ is always smaller than or equal to
the number of relations of linear dependence of $M$. Therefore,
to show that their ranks are equal, it is sufficient to show that
every relation of linear dependence that we find in $M$ still holds
in $M|C$. That is, for every relation of linear dependence between
the probabilities $\mathsf{P}_{a_{0}a_{1}b_{0}b_{1}}$ that is contained
in $M\,,$ it is sufficient to show that the relation still holds
when the sums of probabilities are matched to the elements $p(ab|xy)$
of the no-signaling box in order to show that the system of equations has a
solution.

Observe that $M$ contains only zeros and ones, as the equations (\ref{eq:linsyst})
are just sums of probabilities. Moreover, each column of $M$ corresponds
to the probability of a state $\omega_{a_{0}a_{1}b_{0}b_{1}}\,,$
while each row corresponds to an equation with independent term $p(ab|xy)\,.$
Because the equations (\ref{eq:linsyst}) correspond to intersections
of partitions of the set of $\omega_{a_{0}a_{1}b_{0}b_{1}}\,,$ we
can observe that each row of $M$ has a 1 in the column corresponding
to the states $\omega_{a_{0}a_{1}b_{0}b_{1}}$ contained in the
corresponding partition, and a 0 elsewhere. Put another way, in order
to construct $M$ one must first partition the set of $\omega_{a_{0}a_{1}b_{0}b_{1}}$
in four different ways, corresponding to 
\begin{equation}
\begin{aligned}\left\{ \mathsf{A}_{0}^{a}: a\in \{0,1\}\right\}\,, & \left\{ \mathsf{A}_{1}^{a}:a\in \{0,1\}\right\}\,,\\
\left\{ \mathsf{B}_{0}^{b}:b\in \{0,1\}\right\}\,, & \left\{ \mathsf{B}_{1}^{b}:b\in \{0,1\}\right\}
\end{aligned}
\end{equation}
for Alice and Bob respectively. This gives partitions of the columns
of $M\,.$ Then, the 16 possible ways of intersecting partitions of
Alice's with partitions of Bob's give the 16 equations with independent
term $p(ab|xy)\,.$ But notice now that the partition structure imposes
a certain relation of linear dependence between the rows of $M.$
Indeed, for each $b,y,$ we have 
\begin{equation}
\left\{ \bigcup_{a}\left(\mathsf{A}_{0}^{a}\cap\mathsf{B}_{y}^{b}\right)\right\} =\left\{ \bigcup_{a}\left(\mathsf{A}_{1}^{a}\cap\mathsf{B}_{y}^{b}\right)\right\} \,,
\end{equation}
as 
\begin{equation}
\bigcup_{a}\left(\mathsf{A}_{0}^{a}\cap\mathsf{B}_{y}^{b}\right)=\left(\bigcup_{a}\mathsf{A}_{0}^{a}\right)\cap\mathsf{B}_{y}^{b}=\Omega\cap\mathsf{B}_{y}^{b}=\left(\bigcup_{a}\mathsf{A}_{1}^{a}\right)\cap\mathsf{B}_{y}^{b}=\bigcup_{a}\left(\mathsf{A}_{1}^{a}\cap\mathsf{B}_{y}^{b}\right)\,,
\end{equation}
and, similarly, for each $a,x$ we have 
\begin{equation}
\left\{ \bigcup_{b}\left(\mathsf{A}_{x}^{a}\cap\mathsf{B}_{0}^{b}\right)\right\} =\left\{ \bigcup_{b}\left(\mathsf{A}_{x}^{a}\cap\mathsf{B}_{1}^{b}\right)\right\} \,.
\end{equation}
Using the correspondence of these partitions with the partitions of
the columns of $M$ gives 8 relations of linear dependence between
its rows. Now, noticing that a union of columns of $M$ corresponds
to a sum of probabilities $\mathsf{P}_{a_{0}a_{1}b_{0}b_{1}}\,,$
we find that these relations correspond exactly to the no-signaling conditions,
as 
\begin{equation}
\mathsf{P}\left(\bigcup_{a}\left(\mathsf{A}_{x}^{a}\cap\mathsf{B}_{y}^{b}\right)\right)=\sum_{a}\mathsf{P}\left(\mathsf{A}_{x}^{a}\cap\mathsf{B}_{y}^{b}\right)
\end{equation}
so 
\begin{equation}
\sum_{a}\mathsf{P}\left(\mathsf{A}_{0}^{a}\cap\mathsf{B}_{y}^{b}\right)=\sum_{a}\mathsf{P}\left(\mathsf{A}_{1}^{a}\cap\mathsf{B}_{y}^{b}\right)
\end{equation}
and similarly for Bob. Of course, by definition of no-signaling box, these relations
hold for the independent terms $p(ab|xy)$ as well, since 
\begin{equation}
\begin{aligned}\sum_{a}\mathsf{P}\left(\mathsf{A}_{x}^{a}\cap\mathsf{B}_{y}^{b}\right) & =\sum_{a}p(ab|xy)\,,\\
\sum_{b}\mathsf{P}\left(\mathsf{A}_{x}^{a}\cap\mathsf{B}_{y}^{b}\right) & =\sum_{b}p(ab|xy)
\end{aligned}
\end{equation}
by construction of the linear system (see equations (\ref{eq:prob-intersection})
and (\ref{eq:linsyst})). Therefore, every relation of linear dependence
between the rows of $M$ holds also between the rows of $M|C\,,$
as required.

Notice also that the implication goes both ways: the linear system
has a solution \emph{only if} the set of probabilities $p(ab|xy)$
is no-signaling. The coefficient matrix $M$ incorporates the no-signaling
conditions by construction of the states $\omega_{a_{0}a_{1}b_{0}b_{1}}$
with probabilities $\mathsf{P}_{a_{0}a_{1}b_{0}b_{1}}\,.$ Therefore,
if these conditions do not hold for the independent terms $p(ab|xy)\,,$
then the rank of $M|C$ must be larger than that of $M\,,$ as $M|C$
contains more linearly independent rows than $M.$ 
\end{proof}

\section*{Supplementary Note 3 - Classical agreement in no-signalling boxes}

We now state and prove the classical agreement theorem in the no-signaling language, i.e. for local boxes. We restrict to boxes of two inputs and two outputs since, by Theorem \ref{thm:generalization-app}, any larger box exhibiting disagreement can be reduced to a 2-input 2-output box that also exhibits disagreement, while preserving its locality properties.

Recall that a distribution  $\{p(ab|xy):(a,b,x,y)\in\mathcal{A} \times \mathcal{B}\times\mathcal{X}\times\mathcal{Y}\}$  is local if, for each \sloppy $a,b,x,y,$
\begin{equation} \label{eq:plocal-app}
    p(ab|xy)=\sum_{\lambda} p_{\lambda}\, p_A(a|x\lambda)\,p_B(b|y\lambda),
\end{equation}
for some distributions  $\{p_{\lambda}:\lambda\in\Lambda\}$,  $\{p_A(a|x\lambda):(a,x,\lambda)\in\mathcal{A}\times\mathcal{X}\times\Lambda\},\,\{p_B(b|y\lambda):(b,y,\lambda)\in\mathcal{B}\times\mathcal{Y}\times\Lambda\}$ \color{black} where $\Lambda$ is an index set.

\begin{thm}\label{cor:aumann-app}
	Suppose Alice and Bob share a local no-signaling box with underlying probability
	distribution $p$. Let $q_A,q_B \in [0,1]$, and let
	\begin{equation}
	\begin{aligned}
	    p(b=1|a=0,x=0,y=1) &= q_A,\\
	    p(a=1|b=0,x=1,y=0) &= q_B.
	\end{aligned}
	\end{equation}
	If $q_{A}$ and $q_{B}$ are common certainty between the observers,	then $q_{A}=q_{B}.$
\end{thm}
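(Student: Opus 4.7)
The plan is to reduce the claim to Theorem \ref{thm:Aumann-app} by embedding the local box in a classical probability space in which the box's conditional probabilities coincide with Alice's and Bob's Aumann-style posteriors.

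The first step is to apply Proposition A1 to obtain the ontological model $(\Omega,\mathcal{F},\mathsf{P})$ on instruction-set states $\omega_{a_0a_1b_0b_1}$, together with the partitions $\mathsf{A}_x^a,\mathsf{B}_y^b$ satisfying $\mathsf{P}(\mathsf{A}_x^a\cap\mathsf{B}_y^b)=p(ab|xy)$. For a generic no-signaling box the quasi-probabilities $\mathsf{P}_{a_0a_1b_0b_1}$ are only guaranteed to be real, but locality buys us more: by (\ref{eq:plocal-app}) and a convex decomposition of each $p_A(\cdot|x,\lambda),p_B(\cdot|y,\lambda)$ into deterministic responses, each hidden variable $\lambda$ picks out an instruction set and $p_\lambda$ becomes a non-negative weight on it. This yields an explicit non-negative solution of (\ref{eq:linsyst}), so $(\Omega,\mathcal{F},\mathsf{P})$ is a bona fide probability space.

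Next, I would take Alice's information partition to be the one generated by $a_0$ (what she sees on input $x=0$) and Bob's to be the one generated by $b_0$, with the events $E_A:=\mathsf{A}_1^1=\{a_1=1\}$ and $E_B:=\mathsf{B}_1^1=\{b_1=1\}$. The correspondence in Proposition A1, together with no-signaling, gives
\begin{equation*}
\mathsf{P}(E_B\mid \mathsf{A}_0^0)=p(b=1\mid a=0,x=0,y=1)=q_A, \qquad \mathsf{P}(E_A\mid \mathsf{B}_0^0)=q_B.
\end{equation*}
The box-level common-certainty hypothesis then transfers verbatim, since the events ``Alice's posterior for $E_B$ is $q_A$'' and ``Bob's posterior for $E_A$ is $q_B$'' are unions of cells of the corresponding partitions, and the realised state $\omega^*$ lies in $\mathsf{A}_0^0\cap\mathsf{B}_0^0$. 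Invoking Theorem \ref{thm:Aumann-app} on $(\Omega,\mathcal{F},\mathsf{P})$ with this data would then conclude $q_A=q_B$.

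The step I expect to be delicate is the ``perfectly correlated'' hypothesis of Theorem \ref{thm:Aumann-app}, since $\{a_1=1\}$ and $\{b_1=1\}$ are not identical as subsets of $\Omega$ for a generic local box. I would try to resolve this either by showing that the common-certainty hypothesis, once pulled back to the meet $A_N\cap B_N$ of the two information partitions, already forces $\mathsf{P}(E_A\triangle E_B\mid A_N\cap B_N)=0$, or by re-running the convex-combination argument in the proof of Theorem \ref{thm:Aumann-app} symmetrically on the single event $E_A\cap E_B$ so that no separate correlation hypothesis is needed beyond what the common-certainty assumption already encodes.
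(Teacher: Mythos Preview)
Your strategy is sound and genuinely different from the paper's. The paper never invokes Theorem~\ref{thm:Aumann-app}; instead it works directly with the local decomposition~(\ref{eq:plocal-app}), first using Theorem~\ref{thm:2in2outcc-app} to discard the cases where $1\in\alpha_n$ or $1\in\beta_n$ for all $n$, and then carrying out an explicit term-by-term manipulation of $\sum_\lambda p_\lambda p_A(0|0\lambda)p_B(1|1\lambda)$ using the constraints that common certainty and perfect correlation impose on the deterministic responses $p_A(\cdot|\cdot\lambda),p_B(\cdot|\cdot\lambda)$. Your route---Fine's theorem to get a genuine probability space on instruction sets, then Theorem~\ref{thm:Aumann-app}---is more structural and avoids both the forward reference to Theorem~\ref{thm:2in2outcc-app} and the somewhat ad hoc chain of equalities at the end of the paper's proof. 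What the paper's computation buys is self-containment in the box language and an explicit display of which local constraints do the work.

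The step you flag as delicate is in fact not delicate at all, and this is the one place where your proposal is incomplete as written. Perfect correlation of the outcomes on inputs $(x,y)=(1,1)$ is a standing hypothesis of the framework (it is used verbatim in the paper's own proof and is built into the parametrisation of Supplementary Table~\ref{tab:nsccd-app}). In your ontological model this says precisely
\[
\mathsf{P}(E_A\triangle E_B)=\mathsf{P}(\mathsf{A}_1^1\cap\mathsf{B}_1^0)+\mathsf{P}(\mathsf{A}_1^0\cap\mathsf{B}_1^1)=p(10|11)+p(01|11)=0,
\]
so $E_A$ and $E_B$ are perfectly correlated in exactly the sense Theorem~\ref{thm:Aumann-app} requires, globally and hence a fortiori conditionally on $A_N\cap B_N$. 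Neither of your two proposed workarounds is needed; once you insert this one line, your reduction to Theorem~\ref{thm:Aumann-app} goes through cleanly.
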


\begin{proof}
	By definition of $q_{A},q_{B},$ and using the fact that the shared
	distribution is local and hence satisfies equation (\ref{eq:plocal-app}),
	we have
	\begin{equation}
	\begin{aligned}q_{A}\sum_{\lambda}p_{\lambda}p_{A}(0|0\lambda) & =\sum_{\lambda}p_{\lambda}p_{A}(0|0\lambda)p_{B}(1|1\lambda)\\
	q_{B}\sum_{\lambda}p_{\lambda}p_{B}(0|0\lambda) & =\sum_{\lambda}p_{\lambda}p_{A}(1|1\lambda)p_{B}(0|0\lambda)\,.
	\end{aligned}
	\label{eq:qAqBlocal}
	\end{equation}
	From Theorem \ref{thm:2in2outcc-app} we have that, if $1\in \alpha_{n}$
	or $1\in \beta_{n}$ for all $n\in\mathbb{N}$ then there is no common certainty of disagreement
	for any no-signaling distribution, and these encompass local distributions.
	Hence there only remains to prove the claim for $1\not\in \alpha_{n}$
	and $1\not\in \beta_{n},$ for some $n\in\mathbb{N}.$ This implies that
	\begin{equation}
	\begin{aligned}p(b=0|a=0,x=0,y=0) & =1\\
	p(a=0|b=0,x=0,y=0) & =1
	\end{aligned}
	\end{equation}
	and hence
	\begin{equation}
	\begin{aligned}\sum_{\lambda}p_{\lambda}p_{A}(0|0\lambda) & =\sum_{\lambda}p_{\lambda}p_{A}(0|0\lambda)p_{B}(0|0\lambda)\\
	\sum_{\lambda}p_{\lambda}p_{B}(0|0\lambda) & =\sum_{\lambda}p_{\lambda}p_{A}(0|0\lambda)p_{B}(0|0\lambda),
	\end{aligned}
	\end{equation}
	which implies, on the one hand, that
	\begin{equation}
	\sum_{\lambda}p_{\lambda}p_{A}(0|0\lambda)=\sum_{\lambda}p_{\lambda}p_{B}(0|0\lambda)\label{eq:00equal}
	\end{equation}
	and, on the other, that
	\begin{equation}
	\sum_{\lambda}p_{\lambda}p_{A}(0|0\lambda)p_{B}(1|0\lambda)=\sum_{\lambda}p_{\lambda}p_{A}(1|0\lambda)p_{B}(0|0\lambda)=0,
	\end{equation}
	that is,
	\begin{equation}
	p_{A}(0|0\lambda)p_{B}(1|0\lambda)=p_{A}(1|0\lambda)p_{B}(0|0\lambda)=0\label{eq:cc-local-app-app}
	\end{equation}
	for all $\lambda.$ Therefore, there remains to prove only that
	\begin{equation}
	\sum_{\lambda}p_{\lambda}p_{A}(0|0\lambda)p_{B}(1|1\lambda)=\sum_{\lambda}p_{\lambda}p_{A}(1|1\lambda)p_{B}(0|0\lambda).\label{eq:cc-local-apptoprove}
	\end{equation}
	Because the outputs for inputs $x=1,y=1$ are perfectly correlated,
	we have
	\begin{equation}
	p_{A}(0|1\lambda)p_{B}(1|1\lambda)=p_{A}(1|1\lambda)p_{B}(0|1\lambda)=0\label{eq:perfcorrlocal}
	\end{equation}
	for all $\lambda$ and, since $p_{A}(0|1\lambda)+p_{A}(1|1\lambda)=1$
	and similarly for $p_{B},$ this implies 
	\begin{equation}
	p_{A}(1|1\lambda)=p_{B}(1|1\lambda).\label{eq:11equal}
	\end{equation}
	Then we can prove (\ref{eq:cc-local-apptoprove}) by simple manipulations of the probability distributions of each party:
	\begin{equation}
	\begin{aligned}\sum_{\lambda}p_{\lambda}p_{A}(0|0\lambda)p_{B}(1|1\lambda) & =\sum_{\lambda} p_{\lambda}p_{A}(0|0\lambda)p_{A}(1|1\lambda)\left(p_{B}(0|0\lambda)+p_{B}(1|0\lambda)\right)\\
	& =\sum_{\lambda}p_{\lambda}p_{A}(0|0\lambda)p_{A}(1|1\lambda)p_{B}(0|0\lambda)\\
	& =\sum_{\lambda}p_{\lambda}\left(p_{A}(0|0\lambda)+p_{A}(1|0\lambda)\right)p_{A}(1|1\lambda)p_{B}(0|0\lambda)\\
	& =\sum_{\lambda}p_{\lambda}\left(p_{A}(0|1\lambda)+p_{A}(1|1\lambda)\right)p_{A}(1|1\lambda)p_{B}(0|0\lambda)\\
	& =\sum_{\lambda}p_{\lambda}p_{A}(1|1\lambda)p_{A}(1|1\lambda)p_{B}(0|0\lambda)\\
	& =\sum_{\lambda}p_{\lambda}p_{A}(1|1\lambda)p_{B}(0|0\lambda)
	\end{aligned}
	\end{equation}
	where we have used the fact that $\sum_{b\in\mathcal{B}}p_{B}(b|y\lambda)=1$
	for all $y,\lambda$ in the first equality, (\ref{eq:cc-local-app-app}) in the second
	and third, $\sum_{a\in\mathcal{A}}p_{A}(a|x\lambda)=1$ for all $x,\lambda$
	in the fourth, (\ref{eq:perfcorrlocal}) again in the fifth, and $p_{A}(1|1\lambda)^{2}=p_{A}(1|1\lambda)$
	for all $\lambda$ (since $p_{A}(a|x\lambda)$ is either 1 or 0 for
	every $a,x,\lambda$) in the last.
\end{proof}

\section*{Supplementary Note 4 - Proof of Theorem 3}

In the proof of Theorem \ref{thm:2in2outcc-app}, we will make use of the following Lemma:

\vspace{2mm}
\noindent\textit{Lemma A2.}{
Consider a no-signaling box of 2 inputs and 2 outputs. Then, $\alpha_0=\{0,1\}$ if and only if $q_A=p(b=1|y=1).$ Analogously, $\beta_0=\{0,1\}$ if and only if $q_B=p(a=1|x=1).$
}

\begin{proof}
We start by proving the direct implication. By hypothesis, 
\begin{align*}
q_{A} & =p(b=1|a=0,x=0,y=1)=\frac{p(01|01)}{p(a=0|x=0)}\\
 & =p(b=1|a=1,x=0,y=1)=\frac{p(11|01)}{p(a=1|x=0)}.
\end{align*}
But now, we can write 
\begin{align*}
p(b=1|y=1)=p(01|01)+p(11|01)=p(a=0|x=0)q_{A}+p(a=1|x=0)q_{A}=q_{A}.
\end{align*}
The reverse implication is trivial. The analogous statement can be proved
by interchanging the roles of Alice and Bob. 
\end{proof}

\begin{thm} \label{thm:2in2outcc-app}
A two-input two-output no-signaling box gives rise to common certainty of disagreement if and only if it takes the form of Supplementary Table \ref{tab:nsccd-app}.

\begin{table}[h]
\centering
\begin{tabular}{|c|c|c|c|c|}
\hline 
\label{isabox-app}$xy\backslash ab$  & 00  & 01  & 10  & 11\tabularnewline
\hline 
\hline 
00  & $r$  & 0  & 0  & $1-r$ \tabularnewline
\hline 
01  & $r-s$  & $s$  & $-r+t+s$  & $1-t-s$\tabularnewline
\hline 
10  & $t-u$  & $u$  & $r-t+u$  & $1-r-u$\tabularnewline
\hline 
11  & $t$  & 0  & 0  & $1-t$\tabularnewline
\hline 
\end{tabular}
\caption{Parametrization of two-input two-output no-signaling boxes with common certainty of disagreement. Here, $r,s,t,u  \in [0,1]$ are such that all the entries of the box are non-negative, $r>0$, and $s-u\neq r-t.$}
\label{tab:nsccd-app}
\end{table}
\end{thm}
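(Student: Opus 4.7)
The biconditional splits into an ``only if'' direction, in which common certainty of disagreement must force the parametric form of Supplementary Table~\ref{tab:nsccd-app}, and an ``if'' direction, which is a direct verification that any box of that form admits CCD. My plan for the ``only if'' direction rests on a careful analysis of the Aumann-style iteration $\alpha_n, \beta_n$ together with Lemma A2. Invoking Lemma A2 and the observation (used already in Theorem~\ref{cor:aumann-app}) that ``$1 \in \alpha_n$ or $1 \in \beta_n$ for all $n$'' rules out CCD, I will restrict to the case where $\alpha_N = \{0\}$ and $\beta_N = \{0\}$ at stability, after a harmless relabeling of outputs. In this stable regime, ``Alice is certain Bob's output lies in $\beta_N$'' immediately gives $p(b=0 \mid a=0, x=0, y=0) = 1$ and hence $p(01|00) = 0$, and the parallel statement for Bob gives $p(10|00) = 0$. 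This pins the $xy = 00$ row to the form $(r, 0, 0, 1-r)$ with $r = p(a=0|x=0) > 0$, the strict positivity being required for the conditional probabilities defining $q_A$ and $q_B$ to exist in the first place.

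The main obstacle is deriving the perfect correlation on the $xy = 11$ row, because the beliefs $q_A$ and $q_B$ do not mention the inputs $x=1$ and $y=1$ together. To handle this, I will push the argument through the ontological model of Supplementary Note~2, in which states are instruction sets $\omega_{a_0 a_1 b_0 b_1}$ and Alice's and Bob's knowledge partitions act on all four counterfactual outputs simultaneously. Common certainty at the relevant state $\omega^*$ will require the meet of the knowledge partitions to be consistent not only with the $xy = 00$ beliefs but also with ``Alice's belief about Bob's output at $y=1$ equals $q_A$'' and the dual statement for Bob being common knowledge. Running the analogue of the Aumann iteration on the partitions generated by $a_1$ and $b_1$ should then force $p(01|11) = p(10|11) = 0$, yielding the row $(t, 0, 0, 1-t)$ with $t = p(a=0|x=1)$; any other no-signaling completion of the partial box would be inconsistent with common knowledge at the counterfactual inputs.

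Once the $xy=00$ and $xy=11$ rows are fixed, the no-signaling marginals propagate them to the remaining two rows with two free parameters $s$ and $u$, recovering exactly the tabulated form. The disagreement condition $q_A \neq q_B$ becomes $s/r \neq (r-t+u)/r$, i.e., $s - u \neq r - t$. For the ``if'' direction I expect a short direct check: substituting a box of the tabulated form into the definitions of $\alpha_0$ and $\beta_0$ and running the iteration should confirm $\alpha_\infty = \{0\}$ and $\beta_\infty = \{0\}$ under the stated non-degeneracy $s - u \neq r - t$, and the resulting disagreement is then common certainty at any state with $a_0 = 0, b_0 = 0$.
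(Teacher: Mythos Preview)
Your proposal has a substantive gap in the ``only if'' direction. You treat the perfect correlation on the $xy=11$ row as something to be \emph{derived}, and you propose to extract it from an Aumann-style iteration on the instruction-set ontological model of Supplementary Note~2. But in this paper's framework, perfect correlation of the events $E_A$ and $E_B$ (that is, $p(01|11)=p(10|11)=0$) is a \emph{standing hypothesis} of the common-certainty-of-disagreement problem, inherited from the setup of Theorem~\ref{thm:Aumann-app}; Theorem~\ref{cor:aumann-app} and Theorem~\ref{thm:generalization-app} likewise invoke it as given. The paper's proof therefore never needs to establish the $xy=11$ row: once CCD yields $p(01|00)=p(10|00)=0$ (Case~1) and the $xy=11$ row is taken as assumed, no-signaling alone fills in the remaining entries with the four free parameters $r,s,t,u$. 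Your proposed detour through the ontological model would fail in any case: for post-classical boxes that model carries a genuine \emph{quasi}-probability $\mathsf P$, so conditional probabilities and the monotone iteration $\alpha_n,\beta_n$ are not well-defined there, and no Aumann-type argument can be run.

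There is also a circularity in your reduction step. You invoke ``the observation (used already in Theorem~\ref{cor:aumann-app})'' that $1\in\alpha_n$ or $1\in\beta_n$ for all $n$ precludes CCD. In the paper, however, that fact is proved \emph{inside} Theorem~\ref{thm:2in2outcc-app} itself (Cases~2 and~3, via Lemma~A2 together with the perfect-correlation hypothesis) and is then \emph{quoted} by Theorem~\ref{cor:aumann-app}, not the other way around; you cannot import it as prior input here. Finally, your ``if'' sketch is on the right track but too quick: the iteration need not give $\alpha_0=\{0\}$ directly, since $1\in\alpha_0$ occurs when $(1-t-s)/(1-r)=s/r$, and the paper handles these sub-cases explicitly, including showing that $\alpha_0=\beta_0=\{0,1\}$ would force $s-u=r-t$, contradicting the non-degeneracy assumption.
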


\begin{proof}

We first prove that common certainty of disagreement imposes the
claimed structure for the no-signaling box. Therefore, we assume common certainty of disagreement, i.e., 
\begin{equation}\label{hyp1}
(0,0,0,0) \in A_n \cap B_n \qquad \forall n\in \mathbb{N}
\end{equation}
and
\begin{equation}
    q_A\neq q_B.
\end{equation}


We split the proof into three cases
based on the contents of the sets $A_{n},B_{n}$: 
\begin{casenv}
	\item $1\notin \alpha_n,\, 1\notin \beta_{n}$ for some $n$.\footnote{This need not happen at the same stage, i.e., possibly $1\notin \alpha_{m}$,
		for some $m<n$. However in this case, since the sets are nonempty
		by assumption, we have $\alpha_{n}=\alpha_{m}.$ } From common certainty of disagreement (equation \eqref{hyp1}), we have that
	\begin{align*}
	p(B_{n}|a=0,x=0,y=0)=1,\qquad p(A_{n}|b=0,x=0,y=0)=1,
	\end{align*}
	which, together with $1\notin \alpha_n,\, 1\notin \beta_{n}$, translates into: 
	\begin{align*}
	p(01|00)=0,\qquad p(10|00)=0.
	\end{align*}
 For $q_A,q_B$ to be well-defined, $p(a=0|x=0)$ and $p(b=0|y=0)$ must be non-zero, therefore we must have $p(00|00)>0$.\color{black}
The rest of the table is determined by no-signaling constraints in terms of
parameters $r$, $s$, $t$ and $u$. Given the box in the statement
of the theorem, $q_{A}\ne q_{B}$ if and only if $s-u\ne r-t$, 
which concludes the proof of this case. 
\item $\alpha_n=\lbrace0,1\rbrace$, for all $n\in\mathbb{N}$
while $1\notin \beta_{m}$ for some $m$. We show that this
case implies $q_{A}=q_{B}$, so it contradicts common
certainty of disagreement. Indeed, the definition of $\alpha_{m+1}$ enforces
the conditions: 
\begin{align*}
p(b=0|a=0,x=0,y=0)=1=p(b=0|a=1,x=0,y=0).
\end{align*}
This implies 
\begin{align*}
 0= \color{black} p(b=1|a=0,x=0,y=0) & =\frac{p(01|00)}{p(a=0|x=0)}\quad\Rightarrow\quad p(01|00)=0,\\
 0= \color{black} p(b=1|a=1,x=0,y=0) & =\frac{p(11|00)}{p(a=1|x=0)}\quad\Rightarrow\quad p(11|00)=0.
\end{align*}
Adding no-signaling conditions to these last equations, we also obtain
\begin{equation}
0=p(b=1|y=0)=p(01|10)+p(11|10),
\end{equation}
and so
\begin{equation}\label{ccd-zeroprob}
p(01|10)=0=p(11|10)
\end{equation}
and
\begin{equation}\label{ccd-oneprob}
    p(b=0|y=0)=1.
\end{equation}
This allows us to identify $q_{B}$ with $p(a=1|x=1)$, since
\begin{align*}
q_{B}&=  p(a=1|b=0,x=1,y=0)\\
&=\frac{p(10|10)}{p(b=0|y=0)}\\
&=p(10|10)\\
 & =p(a=1|x=1)-p(11|10)\\
 &=p(a=1|x=1),
\end{align*}
where the third and last equalities follow from equations \eqref{ccd-oneprob} and \eqref{ccd-zeroprob} respectively.
Now, taking into account Lemma A2 and perfect
correlations, we have
\[
q_{A}=p(b=1|y=1)=p(a=1|x=1),
\]
which shows that $q_{A}=q_{B}$, as mentioned above. 
\item $\alpha_n=\lbrace0,1\rbrace$, $\beta_n=\lbrace0,1\rbrace$ for all $n\in\mathbb{N}$. We now show that this case also implies
$q_{A}=q_{B}$, contradicting common certainty of disagreement.
Using Lemma A2 we have 
\[
q_{B}=p(a=1|x=1)\qquad\text{ as well as }\qquad q_{A}=p(b=1|y=1).
\]
Now, perfect correlations impose that $p(a=1|x=1)=p(b=1|y=1)$, that
is, $q_{A}=q_{B}$. 
\end{casenv}
Next, we prove the converse implication of the theorem. We show
that any no-signaling box of the above form must exhibit common certainty of
disagreement. Since $s-u\ne r-t$,  the probabilities Alice and Bob assign to their events of interest are different\color{black}:
\begin{equation}
\begin{aligned}q_{A}:\color{black}=p(b=1|a=0,x=0,y=1) & =s/r,\\
q_{B}:\color{black}=p(a=1|b=0,x=1,y=0) & =(r-t+u)/r\,.
\end{aligned}
\label{eq:qAqB}
\end{equation}

In the case that $1\not\in \alpha_{0},\:1\not\in \beta_{0},$
we also have that $\alpha_1=\alpha_0$ and $\beta_1=\beta_0\,,$ and common
certainty of disagreement follows, because $(0,0,0,0)$ is in $A_n\cap B_n$ for all $n$.

If the parameters are such that 
\begin{equation}
\frac{1-t-s}{1-r}=\frac{s}{r},
\end{equation}
but 
\begin{equation}
\frac{1-r-u}{1-r}\neq\frac{r-t+u}{r},
\end{equation}
then 
\begin{equation}
p(b=1|a=1,x=0,y=1)=q_{A},
\end{equation}
as well, but 
\begin{equation}
p(a=1|b=1,x=1,y=0)\neq q_{B},
\end{equation}
and so $1\in \alpha_{0},\:1\not\in \beta_{0}.$ Since we have
\begin{equation}
p(b=0|a=0,x=0,y=0)=1,
\end{equation}
we find $(0,0,0,0)\in A_{1}$,\footnote{Note $(1,0,0,0)\not\in A_{1},$ though this does not affect the present
proof.} and hence all $A_{n}$ still contain $(0,0,0,0),$ yielding
common certainty of disagreement.

Symmetric reasoning covers the case $1\not\in \alpha_{0},\:1\in \beta_{0}$, and only the case where $\alpha_0=\{0,1\}$, $\beta_0=\{0,1\}$ remains. This happens when 
\begin{equation}
    \begin{aligned}
    p(b=1|a=1,x=0,y=1)&=p(b=1|a=0,x=0,y=1),\\
    p(a=1|b=0,x=1,y=0)&=p(a=1|b=1,x=1,y=0)
    \end{aligned}
\end{equation}
which, in terms of the parameters, is equivalent to
\begin{align}
    \frac{1-t-s}{1-r}&=\frac{s}{r},\label{cond1}\\
    \frac{1-r-u}{1-r}&=\frac{r-t+u}{r}\label{cond2}.
\end{align}
However, these two conditions are satisfied simultaneously only when
$s-u=r-t$, as we now show. From Equation \eqref{cond1} we get 
\[
s=r(1-t),
\]
while from Equation \eqref{cond2} we obtain 
\[
u=t(1-r).
\]
This means that if Equations \eqref{cond1} and \eqref{cond2} are both satisfied,
then 
\[
s-u=r(1-t)-t(1-r)=r-t,
\]
 which contradicts the assumption that $s-u\neq r-t$. \color{black}
\end{proof}

\begin{thm} \label{thm:ccd-not-quantum}
No two-input two-output quantum box can give rise to common certainty of disagreement.
\end{thm}

\begin{proof}
In order to give rise to common certainty of disagreement, the probability
distribution that the state and measurements generate must be of the
form of Table \ref{tab:nsccd-app}. Theorem 1 in Tsirelson's seminal paper \cite{cirelson_quantum_1980} implies that, if there is a quantum realization
of the box, then there exist real, unit vectors 
\begin{equation}
\ket{w_{x}},\ket{v_{y}}
\end{equation}
such that the correlations 
\begin{equation}
c_{xy}:=p(a=b|xy)-p(a\neq b|xy)
\end{equation}
satisfy 
\begin{equation}
c_{xy}=\bra{w_{x}}\ket{v_{y}}
\end{equation}
for each $x,y\in\{0,1\}.$ \color{black} For the box in Theorem \ref{thm:2in2outcc-app},
this means, in particular, that 
\begin{equation}
\begin{aligned}\bra{w_{0}}\ket{v_{0}} & =1\,,\\
\bra{w_{1}}\ket{v_{1}} & =1\,,
\end{aligned}
\end{equation}
and, since the vectors have unit norm, this implies that 
\begin{equation}
\begin{aligned}\ket{w_{0}} & =\ket{v_{0}}\,,\\
\ket{w_{1}} & =\ket{v_{1}}\,.
\end{aligned}
\end{equation}
Then, we are left with 
\begin{equation}
\begin{aligned}c_{01} & =\bra{w_{0}}\ket{w_{1}}\,,\\
c_{10} & =\bra{w_{1}}\ket{w_{0}}\,.
\end{aligned}
\end{equation}
Since the vectors are real, we find 
\begin{equation}
c_{01}=c_{10}\,,
\end{equation}
but this implies that 
\begin{equation}
s-u=r-t,
\end{equation}
which implies that $q_A=q_B$ and, hence, impedes disagreement. 
\end{proof}

\section*{Supplementary Note 5 - Proof of Theorem 5}

\begin{thm} \label{thm:generalization-app}
No quantum box can give rise to common certainty of disagreement.
\end{thm}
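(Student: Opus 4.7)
The plan is to reduce the general statement to the two-input two-output case already handled by Theorem \ref{thm:ccd-not-quantum}. Concretely, suppose for contradiction that some quantum box with input sets $\mathcal{X},\mathcal{Y}$ and output sets $\mathcal{A},\mathcal{B}$ gives rise to common certainty of disagreement for events $E_A,E_B$. I would construct a coarse-grained 2-input 2-output box from it and show (i) that the new box inherits quantum realizability, and (ii) that it still exhibits common certainty of disagreement. Theorem \ref{thm:ccd-not-quantum} then yields the contradiction.

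The construction proceeds in two steps. First, \emph{input selection}: only two inputs per party actually enter the definition of common certainty of disagreement, namely the input used to define each party's event of interest (which for the purposes of the reduction I label $x=1$ for Alice and $y=1$ for Bob, so that the outputs for $(x,y)=(1,1)$ are perfectly correlated), and the input used to query the other party's event (labelled $x=0,y=0$, on which the iterated knowledge operators $A_n,B_n$ live). Restricting a quantum box to a subset of inputs is trivially still a quantum box: the same shared state together with the corresponding POVMs realizes the restriction. Second, \emph{output coarse-graining}: for each of the four retained inputs, relabel the outputs as $1$ if the corresponding event occurred and $0$ otherwise. At the quantum level this amounts to replacing $\{M^a_x\}_{a\in\mathcal{A}}$ with the two-outcome POVM $\{\sum_{a\in S_x}M^a_x,\,\sum_{a\notin S_x}M^a_x\}$, and analogously for Bob. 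Since positive sums of POVM elements are again POVM elements, the coarse-grained box is realized by the same quantum state and these new measurements.

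Next I would check that common certainty of disagreement transfers to the coarse-grained box. The quantities defining $q_A,q_B$, the perfect correlation assumption, and the iterated sets $A_n,B_n$ (equivalently, the sets $\alpha_n,\beta_n$ used throughout Supplementary Note 4) are all determined by conditional probabilities of the four retained input pairs with binary outputs. Hence the coarse-grained box continues to satisfy $(0,0,0,0)\in A_n\cap B_n$ for all $n$ together with $q_A\neq q_B$. Applying Theorem \ref{thm:ccd-not-quantum} to this 2-input 2-output quantum box gives the desired contradiction.

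The main subtlety I anticipate is justifying that no auxiliary input is needed to sustain common certainty of disagreement; a priori one might worry that higher-order reasoning about the other party's knowledge could invoke other measurement settings. However, the iterated knowledge structure in the no-signaling box setup is defined entirely in terms of the probabilities at the four retained input pairs, so this concern is illusory. The routine but worth-noting point is the preservation of quantum realizability under coarse-graining, which follows because the set of POVMs is closed under merging outcomes; this is the same observation that underlies Theorem \ref{thm:ccd-not-quantum} being formulated in the POVM-agnostic correlation language of Tsirelson.
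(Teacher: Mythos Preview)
Your overall strategy—reduce to a $2\times 2$ box and invoke Theorem~\ref{thm:ccd-not-quantum}—is exactly the paper's. Input restriction and the preservation of quantum realisability under POVM coarse-graining are correctly argued. The gap is in the output coarse-graining for the $x=0$ and $y=0$ measurements, which you leave essentially unspecified (``relabel the outputs as $1$ if the corresponding event occurred and $0$ otherwise'' has no clear meaning for those inputs, since the events $E_A,E_B$ live on the $x=1,y=1$ side).

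This is precisely the delicate step. The paper coarse-grains the input-$0$ outcomes according to the \emph{stabilised} sets $\alpha_N,\beta_N$ (i.e.\ $\tilde a=0$ iff $a\in\alpha_N$, and similarly for Bob), not according to $\alpha_0,\beta_0$ or to the single outcome $a=0$. This choice is what forces $\tilde p(01|00)=\tilde p(10|00)=0$ in the reduced box: for $a\in\alpha_N$ one has $p(\beta_N\mid a,x{=}0,y{=}0)=1$ because $\alpha_{N+1}=\alpha_N$, whence $\sum_{a\in\alpha_N,\,b\notin\beta_N}p(ab|00)=0$. If instead you coarse-grain by $\alpha_0,\beta_0$, outcomes in $\alpha_0\setminus\alpha_N$ generically have $p(\beta_0\mid a,x{=}0,y{=}0)<1$, so $\tilde p(01|00)$ need not vanish and the reduced box need not lie in Table~\ref{tab:nsccd-app}. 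Conversely, the inclusion $\alpha_N\subseteq\alpha_0$ is what guarantees that every $a$ mapped to $\tilde a=0$ still assigns probability $q_A$, so that $\tilde q_A=q_A$ (and likewise $\tilde q_B=q_B$), preserving disagreement. Your sentence ``the iterated sets $A_n,B_n$ \ldots\ are all determined by conditional probabilities of the four retained input pairs with binary outputs'' is the point that fails: those sets are determined by the \emph{fine-grained} conditional probabilities, and a generic binary coarse-graining destroys the common-certainty structure. Once you pin the coarse-graining to $\alpha_N,\beta_N$ and verify the two transfers above, your sketch becomes the paper's proof.
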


\begin{proof}
We show that any no-signaling box with common certainty of disagreement induces a 2-input 2-output no-signaling box with the same property. Thus, if there existed a quantum system that could generate the bigger box, it could also generate the smaller box. Then, Theorem \ref{thm:2in2outcc-app} implies that no quantum box can give rise to common certainty of disagreement.

We define a mapping from a distribution  $\left\{ p(ab|xy):a\in\mathcal{A},b\in\mathcal{B},x\in\mathcal{X},y\in\mathcal{Y}\right\}$ \color{black}
to an `effective' distribution  $\left\{ \tilde{p}(\tilde{a}\tilde{b}|\tilde{x}\tilde{y}):\tilde{a},\tilde{b},\tilde{x},\tilde{y}\in\left\{ 0,1\right\} \right\}$ \color{black}
such that the following conditions hold: 
\begin{enumerate}
\item if $\left\{ p(ab|xy)\right\} $ is quantum, then
so is $\left\{ \tilde{p}(\tilde{a}\tilde{b}|\tilde{x}\tilde{y})\right\} \,,$\label{enu:ptilde-NS} 
\item if $\left\{ p(ab|xy)\right\} $ satisfies common certainty of disagreement,
then so does $\left\{ \tilde{p}(\tilde{a}\tilde{b}|\tilde{x}\tilde{y})\right\} \,.$\label{enu:ptilde-ccd} 
\end{enumerate}
First, notice that the number of inputs can be reduced to 2 without
loss of generality, as common certainty of disagreement is always
defined to be \emph{at }an event (wlog, $(0,0,0,0)$) \emph{about}
another event (wlog, $(1,1,1,1)$). One can associate the inputs $x=0,y=0$
with $\tilde{x}=0,\tilde{y}=0$, respectively, and $x=1,y=1$ with
$\tilde{x}=1,\tilde{y}=1$ respectively, and ignore all other possible
inputs in $\mathcal{X},\mathcal{Y}.$%
{} The outputs, instead, must be grouped according to whether or not
they belong in the sets $A_{n},B_{n}$ (for input 0) and whether or
not they correspond to the event obtaining, i.e. whether or not they
are equal to 1 (for input 1).

Since $p$ satisfies common certainty of disagreement,
we know that $(0,0,0,0)\in A_{n}\cap B_{n}.$ Moreover, by the definitions
of the sets $\alpha_{n},\beta_{n}$ (and since we only consider finite sets $\mathcal{A},\mathcal{B},\mathcal{X},\mathcal{Y}$) there exists an $N\in\mathbb{N}$
such that $\alpha_{n}=\alpha_{N}$ and $\beta_{n}=\beta_{N}$ for
all $n\geq N\,.$ Take such $N,$ and define the following indicator
functions: 
\begin{equation}
\begin{aligned}\chi_{0|0}^{\alpha}(a) & =\begin{cases}
0 & a\not\in\alpha_{N}\\
1 & a\in\alpha_{N}
\end{cases}\\
\chi_{0|0}^{\beta}(b) & =\begin{cases}
0 & b\not\in\beta_{N}\\
1 & b\in\beta_{N}
\end{cases}\\
\chi_{0|1}^{\alpha}(c)=\chi_{0|1}^{\beta}(c) & =\begin{cases}
0 & c=1\\
1 & c\neq1
\end{cases}
\end{aligned}
\end{equation}
(where $c$ stands for output $a,b$ for Alice and Bob, respectively),
with 
\begin{equation}
\begin{aligned}
\chi_{1|x}^{\alpha}(a) & =1-\chi_{0|x}^{\alpha}(a)\,\\
\chi_{1|y}^{\beta}(b) & =1-\chi_{0|y}^{\beta}(b)
\end{aligned}
\end{equation}
for each $a,b,x,y$. Then, the mapping from $p$ to $\tilde{p}$
is defined as follows: 
\begin{equation}
\tilde{p}(\tilde{a}\tilde{b}|\tilde{x}\tilde{y})=\sum_{a,b}\delta_{x,\tilde{x}}\delta_{y,\tilde{y}}\chi_{\tilde{a}|x}^{\alpha}(a)\chi_{\tilde{b}|y}^{\beta}(b)p(ab|xy)\label{eq:ptoptilde}
\end{equation}
where 
\begin{equation}
\delta_{s,t}=\begin{cases}
0 & s\neq t\\
1 & s=t\,.
\end{cases}
\end{equation}

\begin{figure}
	\centering
	\includegraphics[width=0.5\textwidth]{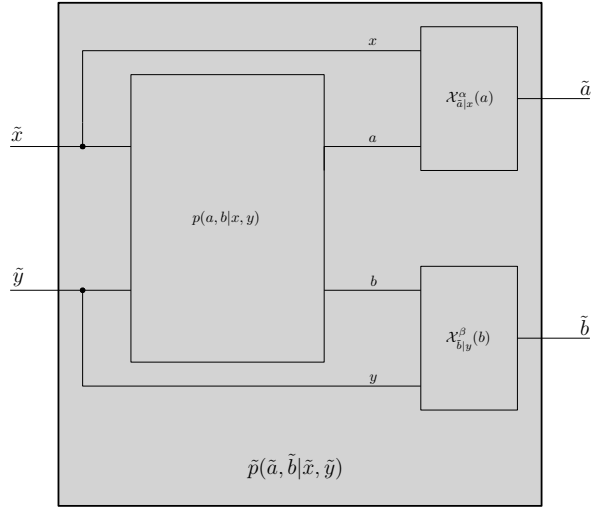}
	\caption{A diagrammatic representation of the construction of $\tilde p $.}
	\label{fig1}
\end{figure}

We note that the distribution $\tilde{p}$ is     obtained from $ p $ by means of local pre- and post-processing (see Supplementary Figure \ref{fig1} for a pictorial representation of the transformation). The classicality and locality of this transformation ensure that  $\tilde{p}$ is a well defined non-signalling distribution that is no more non-local than $p$. In particular, condition 1 above is satisfied.

To check condition \ref{enu:ptilde-ccd},  i.e. that if ${p(ab|xy)}$ satisfies common certainty of disagreement, then so does ${\tilde{p}(ab|xy)}$, \color{black} let $N$ be as in the definition
of the map (\ref{eq:ptoptilde}) and let $a\in\alpha_{N}.$ Then,
by definition of the set $\alpha_{N+1},$ we have

\begin{equation}
p(\beta_{N}|a,x=0,y=0)=1
\end{equation}
and, therefore, 
\begin{equation}
\frac{\sum_{b\in\beta_{N}}p(ab|00)}{\sum_{b\in\mathcal{B}}p(ab|00)}=1\,,
\end{equation}
which entails 
\begin{equation}
\sum_{b\not\in\beta_{N}}p(ab|00)=0\,.
\end{equation}
Summing over $a\in\alpha_{N}\,,$ we get 
\begin{equation}
\sum_{\substack{a\in\alpha_{N}\\
b\not\in\beta_{N}
}
}p(ab|00)=\tilde{p}(01|00)=0\,.
\end{equation}
Similarly, we find $\tilde{p}(10|00)=0\,.$ Since $p$ satisfies common
certainty of disagreement, its outputs on input $x=1,y=1$ must be
perfectly correlated. That is, $p(ab|11)=0$ if $a\neq b\,.$ Hence,
\begin{equation}
\tilde{p}(01|11)=\sum_{a\neq1}p(a1|11)=0
\end{equation}
and similarly for $\tilde{p}(10|11)\,.$ So far, the no-signaling
box corresponding to $\tilde{p}$ has two zeros in the first row and
another two in the last. Using normalization and no-signaling conditions
to fill in the rest of the table, we find it is of the form of the
no-signaling box in Theorem \ref{thm:2in2outcc-app}. There remains to check
for disagreement, i.e. that if 
\begin{equation}
q_{A}=p(b=1|a=0,x=0,y=1)\neq p(a=1|b=0,x=1,y=0)=q_{B}
\end{equation}
then 
\begin{equation}
\tilde{p}(\tilde{b}=1|\tilde{a},\tilde{x}=0,\tilde{y}=1)\neq\tilde{p}(\tilde{a}=1|\tilde{b},\tilde{x}=1,\tilde{y}=0)\,.
\end{equation}
Since $\alpha_{N}\subseteq\alpha_{0}$ and $\beta_{N}\subseteq\beta_{0},$
$p(b=1|a^{*},x=0,y=1)\neq p(a=1|b^{*},x=1,y=0)$ holds in particular
for all $a^{*}\in\alpha_{N},b^{*}\in\beta_{N}.$ This means that,
for $a^{*}\in\alpha_{N},b^{*}\in\beta_{N},$ 
\begin{equation}
\frac{p(a^{*}1|01)}{\sum_{b\in\mathcal{B}}p(a^{*}b|01)}\neq\frac{p(1b^{*}|10)}{\sum_{a\in\mathcal{A}}p(ab^{*}|10)}
\end{equation}
and so 
\begin{equation}
p(a^{*}1|01)\sum_{a\in\mathcal{A}}p(ab^{*}|10)\neq p(1b^{*}|10)\sum_{b\in\mathcal{B}}p(a^{*}b|01)\,.
\end{equation}
Then, we can sum over $\alpha_{N}$ and $\beta_{N}$ on both sides
to find 
\begin{equation}
\sum_{a^{*}\in\alpha_{N}}p(a^{*}1|01)\sum_{\substack{a\in\mathcal{A}\\
b^{*}\in\beta_{N}
}
}p(ab^{*}|10)\neq\sum_{b^{*}\in\beta_{N}}p(1b^{*}|10)\sum_{\substack{a^{*}\in\alpha_{N}\\
b\in\mathcal{B}
}
}p(a^{*}b|01)\,.
\end{equation}
But in terms of $\tilde{p},$ this corresponds to 
\begin{equation}
\tilde{p}(01|01)\sum_{\tilde{a}\in\{0,1\}}\tilde{p}(\tilde{a}0|10)\neq\tilde{p}(10|10)\sum_{\tilde{b}\in\{0,1\}}\tilde{p}(0\tilde{b}|01)
\end{equation}
which implies 
\begin{equation}
\tilde{p}(\tilde{b}=1|\tilde{a}=0,\tilde{x}=0,\tilde{y}=1)\neq\tilde{p}(\tilde{a}=1|\tilde{b}=0,\tilde{x}=1,\tilde{y}=0)
\end{equation}
and hence the disagreement occurs for the $\tilde{p}$ distribution
as well, which proves the result.

Notice that the sets $\tilde{\alpha}_{0},\tilde{\beta}_{0}$ in the
distribution $\tilde{p}$ (defined analogously to $\alpha_{0},\beta_{0}$
in the distribution $p$) will correspond to outputs $\tilde{a},\tilde{b}=0$,
respectively. This is to be expected, as the map $p\rightarrow\tilde{p}$
gives rise to a no-signaling box of the form of the one in Theorem \ref{thm:2in2outcc-app},
where the sets $\tilde{\alpha}_{0},\tilde{\beta}_{0}$ contain a single
element each. (In effect, this means we are ignoring the outputs
$a^{*}\in\alpha_{0}\backslash\alpha_{N}$ and $b^{*}\in\beta_{0}\backslash\beta_{N}$,
but those outputs lead to disagreement but not to common
certainty of it, so they can be safely discarded.)

\end{proof}

\section*{Supplementary Note 6 - Singular disagreement}

\begin{thm}
\label{thm:loc-sd}There is no local two-input two-output box that gives rise to singular disagreement. 
\end{thm}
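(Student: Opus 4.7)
The plan is to carry out a proof that parallels Theorem~\ref{cor:aumann-app}, but tailored to the weaker singular-disagreement hypothesis (which I take to require only that $q_A$ and $q_B$ be well-defined and differ at a specific state, without the full iteration of common certainty). First, I would invoke the local decomposition~(\ref{eq:plocal-app}) and, via Fine's theorem, realize it with deterministic response functions: each hidden value $\lambda$ fixes outputs $a_0(\lambda), a_1(\lambda)$ for Alice and $b_0(\lambda), b_1(\lambda)$ for Bob. Perfect correlation on the inputs $(x,y)=(1,1)$, which is built into the very notion of disagreement (the disagreement being about a perfectly correlated event), forces $a_1(\lambda) = b_1(\lambda)$ on the support of $p_\lambda$, in direct analogy with~(\ref{eq:11equal}).

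Second, I would expand the two posteriors in this deterministic local-model form:
\begin{equation}
q_A = \frac{\sum_\lambda p_\lambda \, \mathbf{1}[a_0(\lambda)=0]\, a_1(\lambda)}{\sum_\lambda p_\lambda \, \mathbf{1}[a_0(\lambda)=0]}, \qquad q_B = \frac{\sum_\lambda p_\lambda \, \mathbf{1}[b_0(\lambda)=0]\, b_1(\lambda)}{\sum_\lambda p_\lambda \, \mathbf{1}[b_0(\lambda)=0]},
\end{equation}
and substitute $a_1 = b_1$, so that both ratios become conditional expectations of the same function $a_1(\lambda)$ under two potentially different conditionings. The third step is to argue that the singular-disagreement hypothesis, applied at the state $(0,0,0,0)$, forces the one-step certainty conditions $p(b=0\,|\,a=0,x=0,y=0)=1$ and $p(a=0\,|\,b=0,x=0,y=0)=1$, mirroring Case~1 of the proof of Theorem~\ref{thm:2in2outcc-app}. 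In a deterministic local model these identities translate to $\mathbf{1}[a_0(\lambda)=0] = \mathbf{1}[b_0(\lambda)=0]$ $p_\lambda$-almost surely, and the two ratios for $q_A$ and $q_B$ collapse to the same value, contradicting the assumed disagreement.

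The main obstacle I foresee is verifying that the singular-disagreement hypothesis really does supply the two one-step certainty conditions used in the last step. If it is strictly weaker than that, the argument has to be replaced by a softer one: use Lemma~A2 together with the locality identity $p(a=1|x=1) = \sum_\lambda p_\lambda a_1(\lambda) = \sum_\lambda p_\lambda b_1(\lambda) = p(b=1|y=1)$ and the perfect-correlation constraint~(\ref{eq:perfcorrlocal}) to pin the posteriors together at any singular state. Either route should yield the required contradiction and so establish that no local two-input two-output box can give rise to singular disagreement.
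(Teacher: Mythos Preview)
Your proposal rests on a misreading of ``singular disagreement.'' It does not mean merely that $q_A\neq q_B$ at some state; as the paper's own proof (and the subsequent Theorem~\ref{thm:singdis-app}) makes explicit, it is the extreme case $q_A=p(b=1\mid a=0,x=0,y=1)=1$ and $q_B=p(a=1\mid b=0,x=1,y=0)=0$. Under your weaker reading the theorem is actually false: local boxes can exhibit bare disagreement (Aumann's result only forbids \emph{common certainty} of disagreement). For instance, the deterministic local model with $p_\lambda=\tfrac12$ on $\lambda\in\{1,2\}$, $(a_0,a_1,b_0,b_1)=(0,0,0,0)$ for $\lambda=1$ and $(0,1,1,1)$ for $\lambda=2$, has perfect $(1,1)$-correlation, $p(00|00)>0$, and $q_A=\tfrac12\neq 0=q_B$. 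So the obstacle you flag in your third step is fatal, not a technicality, and neither the $(0,0)$-input certainty conditions nor the Lemma~A2 fallback can be extracted from your hypothesis.

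With the correct hypothesis, your Fine-theorem machinery would finish the argument in one line, but not via conditions on the input pair $(0,0)$. Rather, $q_A=1$ says $a_0(\lambda)=0\Rightarrow b_1(\lambda)=1$, while $q_B=0$ says $b_0(\lambda)=0\Rightarrow a_1(\lambda)=0$; combined with $a_1(\lambda)=b_1(\lambda)$ from perfect correlation and the existence (from $p(00|00)>0$) of some $\lambda$ with $a_0(\lambda)=b_0(\lambda)=0$, you get $1=b_1(\lambda)=a_1(\lambda)=0$. The paper's proof is exactly this contradiction phrased at the level of the box: it reads off $p(00|01)=0$, $p(10|10)=0$, $p(01|11)=0$, and $p(00|00)>0$, recognizes the pattern as Hardy's paradox~\cite{hardy_quantum_1992}, and invokes the known impossibility for local correlations.
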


\begin{proof}
Assume Alice and Bob input $x=y=0$ and obtain
$a=b=0.$ This implies
\begin{equation}
p(00|00)>0.\label{eq:singdis00>0-1}
\end{equation}
Alice assigns 
\begin{equation}
p(b=1|a=0,x=0,y=1)=1,\label{eq:singdisAlice1-1}
\end{equation}
and Bob assigns 
\begin{equation}
p(a=1|b=0,x=1,y=0)=0\,.\label{eq:singdisBob0-1}
\end{equation}
Further, the outputs for input $(x,y)=(1,1)$ are perfectly correlated,
so, in particular, 
\begin{equation}
p(01|11)=0.\label{eq:singdisPerfcorr-1}
\end{equation}
Equations \eqref{eq:singdisAlice1-1} and \eqref{eq:singdisBob0-1} imply, respectively,
\begin{equation}
    p(00|01)=0 \textnormal{ and } p(10|10)=0.\label{eq:singdisAliceBob}
\end{equation}
However, equations \eqref{eq:singdis00>0-1}, \eqref{eq:singdisPerfcorr-1} and \eqref{eq:singdisAliceBob} make up a form of Hardy's paradox \cite{hardy_quantum_1992}, which is known not to hold for local distributions.
\end{proof}

\begin{thm} \label{thm:singdis-app}
A two-input two-output no-signaling box gives rise to singular disagreement if and only if it takes the form of Supplementary Table \ref{tab:nssd}.
\begin{table}[ht]
\centering
\begin{tabular}{|c|c|c|c|c|}
\hline 
$xy\backslash ab$  & 00  & 01  & 10  & 11\tabularnewline
\hline 
\hline 
00  & $s$  & $t$  & $1-s-u-t$  & $u$\tabularnewline
\hline 
01  & 0  & $s+t$  & $r$  & $1-s-t-r$\tabularnewline
\hline 
10  & $1-u-t$  & $u+t+r-1$  & 0  & $1-r$\tabularnewline
\hline 
11  & $r$  & 0  & 0  & $1-r$\tabularnewline
\hline 
\end{tabular}
\caption{Parametrization of two-input two-output no-signaling boxes with singular disagreement. Here, $r,\,s,\, t,\, u,\, \in [0,1]$ are such that all the entries of the box are non-negative, $s>0$, and $s + t \neq 0$ and $u+t\neq 1$.}
\label{tab:nssd}
\end{table}

\end{thm}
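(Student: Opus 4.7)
The plan is to mirror the structure of the proof of Theorem~\ref{thm:2in2outcc-app}: derive the parametric form from the singular-disagreement conditions plus no-signalling in one direction, and verify that any such parametrized box realises singular disagreement in the other.

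For the forward direction, I first unpack the definition of singular disagreement at the observed outcome $(a,b,x,y)=(0,0,0,0)$, as done at the start of the proof of Theorem~\ref{thm:loc-sd}. This yields five elementary constraints on the entries of the box: $p(00|00)>0$ (the outcome is actually realised on input $(0,0)$), $p(00|01)=0$ (Alice's certainty that $b=1$ on input $y=1$ when conditioning on $a=0$), $p(10|10)=0$ (Bob's symmetric certainty that $a=0$ on input $x=1$ when conditioning on $b=0$), and $p(01|11)=p(10|11)=0$ (perfect correlations on input $(1,1)$). I then introduce the four parameters $s := p(00|00)$, $t := p(01|00)$, $u := p(11|00)$, and $r := p(00|11)$, and propagate them to the remaining twelve entries of the box using normalization of each row together with the no-signalling equalities $\sum_b p(ab|xy)=\sum_b p(ab|xy')$ and $\sum_a p(ab|xy)=\sum_a p(ab|x'y)$. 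A direct row-by-row fill-in then reproduces Supplementary Table~\ref{tab:nssd} exactly.

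The parameter restrictions in the theorem statement emerge naturally from the same unpacking: $s>0$ is just the observation condition $p(00|00)>0$; $s+t\neq 0$ is needed because $s+t = p(a=0|x=0)$ appears in the denominator of Alice's conditional $p(b=1|a=0,x=0,y=1)$ and must be nonzero for it to be well-defined; and $u+t\neq 1$ is needed because $1-u-t = p(b=0|y=0)$ similarly appears in the denominator of Bob's conditional $p(a=1|b=0,x=1,y=0)$. For the backward direction, I take any box of the form in Supplementary Table~\ref{tab:nssd} and verify each singular-disagreement condition directly from the parametrization: $p(00|00)=s>0$ gives the observation condition; $p(b=1|a=0,x=0,y=1) = (s+t)/(s+t) = 1$ follows using $s+t>0$; $p(a=1|b=0,x=1,y=0) = 0/(1-u-t) = 0$ follows using $1-u-t>0$; and the perfect-correlation entries $p(01|11)=p(10|11)=0$ are read off the last row. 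No-signalling is already built into the parametrization and is verifiable by inspection of the row and column sums.

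The argument is essentially a bookkeeping exercise rather than a conceptual one. The main obstacle I expect is simply organizing the no-signalling propagation carefully so that all twelve unpinned entries end up mutually consistent after being expressed in terms of $r,s,t,u$, and confirming that each parameter restriction in the statement corresponds precisely to a well-definedness requirement on the conditional probabilities in the definition of singular disagreement. No deep technical hurdle is anticipated.
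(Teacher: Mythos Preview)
Your proposal is correct and follows essentially the same approach as the paper's proof: both extract the four zero entries from the definition of singular disagreement (perfect correlations on $(1,1)$, Alice's conditional equal to $1$, Bob's conditional equal to $0$), parametrize the remaining entries via no-signalling and normalization, and then verify the converse by direct computation of the conditionals. The only cosmetic difference is that the paper phrases the well-definedness constraints as $p(01|01)\neq 0$ and $p(00|10)\neq 0$, while you phrase them as nonvanishing denominators $s+t\neq 0$ and $1-u-t\neq 0$; these are the same conditions once the parametrization is in place.
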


\begin{proof}
First, we show that singular disagreement implies that the no-signaling box must
be of the above form.  By construction, the inputs $x=y=1$ have perfectly correlated outputs, so that
\begin{equation}
p(01|11)=p(10|11)=0\,.
\end{equation}
Also, singular disagreement requires

\begin{eqnarray}
 & p(b=1|a=0,x=0,y=1) & =1,\label{eq:sd-Alice1}\\
 & p(a=1|b=0,x=1,y=0) & =0.\label{eq:sd-Bob0}
\end{eqnarray}
Equation (\ref{eq:sd-Alice1}) implies that $p(00|01)=0$ and $p(01|01)\neq 0$, while Equation (\ref{eq:sd-Bob0}) implies that $p(10|10)=0$ and $p(00|10)\neq 0$.
The rest of the entries follow from normalization and no-signaling conditions.  The condition $s>0$ ensures that $p(00|00)>0$, as per the input and output that the observers in fact obtained.
Therefore, any two-input two-output no-signaling box that gives rise to singular disagreement must be of the above form.

Proving the converse is straightforward, as it suffices to check that
Equations (\ref{eq:sd-Alice1}) and (\ref{eq:sd-Bob0}) are satisfied
for the parameters of the box. 
\end{proof}

\begin{thm}
No two-input two-output quantum box  can  give  rise  to  singular disagreement.
\end{thm}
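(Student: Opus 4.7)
My plan is to adapt the strategy of the proof of Theorem~\ref{thm:ccd-not-quantum}, but, because we now have perfect correlations on only a single input pair rather than two, to work directly with projectors on the joint quantum state instead of appealing to Tsirelson's vector representation. By Theorem~\ref{thm:singdis-app}, any two-input two-output no-signaling box exhibiting singular disagreement has the form of Supplementary Table~\ref{tab:nssd}, and in particular satisfies (i)~$p(00|00)=s>0$, (ii)~$p(00|01)=0$, (iii)~$p(10|10)=0$, and (iv)~perfect correlations on input~$(1,1)$. I will show that conditions (ii)--(iv) quantum-mechanically force $p(00|00)=0$, contradicting~(i).

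Concretely, the putative quantum realization can be taken (by purification) to consist of a pure state $\ket{\psi}\in\mathcal{H}_A\otimes\mathcal{H}_B$ together with $\pm 1$-valued observables $A_x,B_y$ and spectral projectors $P_a^{A,x},P_b^{B,y}$. Condition~(iv) is equivalent to $(P_0^{A,1}\otimes P_1^{B,1})\ket{\psi}=0=(P_1^{A,1}\otimes P_0^{B,1})\ket{\psi}$, which by completeness yields a decomposition $\ket{\psi}=\ket{\psi_+}+\ket{\psi_-}$ with $\ket{\psi_+}:=(P_0^{A,1}\otimes P_0^{B,1})\ket{\psi}$ and $\ket{\psi_-}:=(P_1^{A,1}\otimes P_1^{B,1})\ket{\psi}$; this is the state-vector analogue of the identity $\ket{w_1}=\ket{v_1}$ exploited in Theorem~\ref{thm:ccd-not-quantum}. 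The key observation is that $I\otimes P_0^{B,1}$ fixes $\ket{\psi_+}$ and annihilates $\ket{\psi_-}$, and symmetrically for the corresponding Alice-side projector.

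Using this, condition~(ii), namely $\lVert(P_0^{A,0}\otimes P_0^{B,1})\ket{\psi}\rVert^{2}=0$, collapses to the single-sided identity $(P_0^{A,0}\otimes I)\ket{\psi_+}=0$, and by symmetry condition~(iii) reduces to $(I\otimes P_0^{B,0})\ket{\psi_-}=0$. Since projectors on disjoint subsystems commute, these two facts together imply $(P_0^{A,0}\otimes P_0^{B,0})\ket{\psi_\pm}=0$, whence $p(00|00)=\lVert(P_0^{A,0}\otimes P_0^{B,0})\ket{\psi}\rVert^{2}=0$, contradicting~(i). The step I expect to be most delicate is the bookkeeping that links the three zero-probability constraints: because the spectral projectors of $A_0$ and $A_1$ (resp.\ $B_0$ and $B_1$) do not commute as operators, one cannot manipulate them as operator products but must instead track how the component vectors $\ket{\psi_\pm}$ behave under each projector; once that setup is in place, the final contradiction is immediate.
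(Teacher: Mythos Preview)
Your proof is correct and takes a genuinely different route from the paper's. The paper proceeds by citing an external classification result: after the relabeling $x\mapsto x\oplus 1$, the four zeros of Supplementary Table~\ref{tab:nssd} all sit in entries with $a\oplus b\oplus 1=xy$, and Ref.~\cite{rai_geometry_2019} shows that any no-signaling box with this zero pattern lies in a \emph{quantum void}, i.e.\ is either local or postquantum; since Theorem~\ref{thm:loc-sd} already excludes locality via Hardy's paradox, the box must be postquantum. Your argument instead gives a direct, self-contained derivation: the perfect correlation on input $(1,1)$ splits $\ket{\psi}$ into the two orthogonal pieces $\ket{\psi_\pm}$, the zeros $p(00|01)=p(10|10)=0$ then force $(P_0^{A,0}\otimes I)\ket{\psi_+}=0$ and $(I\otimes P_0^{B,0})\ket{\psi_-}=0$, and combining these annihilates $(P_0^{A,0}\otimes P_0^{B,0})\ket{\psi}$. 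In effect you are proving explicitly that this particular face is a quantum void, without invoking the general machinery of~\cite{rai_geometry_2019}. The paper's approach situates the result within a broader structural picture; yours is more elementary and makes transparent exactly which projector identities drive the impossibility. One minor remark: reducing to a pure state with \emph{projective} measurements requires Naimark dilation in addition to purification, not purification alone, though this is a standard and harmless step.
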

\begin{proof}
 Ref. \cite{rai_geometry_2019} guarantees that boxes of Theorem \ref{thm:singdis-app} are either local or postquantum. \color{black}
This can be seen by observing that the mapping
\begin{equation}
x\mapsto x\oplus1\,,
\end{equation}
which is a symmetry of the box, makes all four 0's lie in entries $p(ab|xy)$ such that $a \oplus b \oplus 1=xy$. As stated in Sections III and V.B of  Ref.~\cite{rai_geometry_2019}, all boxes with four 0's in entries of the above form lie in quantum voids.
 In fact, the boxes in Theorem \ref{thm:singdis-app} cannot be local as we have seen that they contain a Hardy paradox, so that all such boxes are postquantum. \color{black}
\end{proof}

\begin{thm} \label{thm:generalization-app-sd}
No quantum box can give rise to singular disagreement.
\end{thm}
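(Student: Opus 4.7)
The plan is to mirror the reduction in Theorem \ref{thm:generalization-app}: I would take any no-signaling box $p$ exhibiting singular disagreement and construct from it, by local pre- and post-processing, a $2$-input $2$-output no-signaling box $\tilde p$ that also exhibits singular disagreement. Because local classical manipulations cannot manufacture quantumness, $\tilde p$ must be quantum whenever $p$ is. The preceding theorem (no $2$-input $2$-output quantum box gives rise to singular disagreement) would then supply the contradiction.

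To define the mapping, I would first restrict the inputs to $\tilde x,\tilde y\in\{0,1\}$ by identifying them with the corresponding labels of $p$ and discarding all other inputs. For the outputs I would apply an \emph{asymmetric} coarse-graining that depends on the input: under input $x=0$ (resp.\ $y=0$), map $a=0$ (resp.\ $b=0$) to $\tilde a=0$ (resp.\ $\tilde b=0$) and every other output to $\tilde a=1$ (resp.\ $\tilde b=1$); under input $x=1$ (resp.\ $y=1$), map $a=1$ (resp.\ $b=1$) to $\tilde a=1$ (resp.\ $\tilde b=1$) and every other output to $\tilde a=0$ (resp.\ $\tilde b=0$). Since this is purely local post-processing on each side, $\tilde p$ is no-signaling; and any quantum realization of $p$ induces one for $\tilde p$ by summing the POVM effects that are grouped together.

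I would then verify that $\tilde p$ matches the form of Supplementary Table \ref{tab:nssd}. By no-signaling of $p$, the marginal $\tilde p(\tilde a=0\mid\tilde x=0)$ collapses to $p(a=0\mid x=0)$, so Alice's conditional in $\tilde p$ reduces to $p(b=1\mid a=0,x=0,y=1)=1$; the symmetric computation gives Bob's conditional of $0$. Perfect correlation at $(\tilde x,\tilde y)=(1,1)$ survives because only $a=1$ feeds into $\tilde a=1$ and only $b=1$ into $\tilde b=1$, so the off-diagonal entries $\tilde p(01|11)$ and $\tilde p(10|11)$ become sums over terms $p(ab|11)$ with $a\neq b$, which vanish by perfect correlation of $p$. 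Finally, $\tilde p(00|00)>0$ because $(a,b)=(0,0)$ was the outcome actually observed by Alice and Bob in the original box.

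The main obstacle is picking the right asymmetric coarse-graining: the ``$0$'' label under input $0$ must track the observed outcome, whereas the ``$1$'' label under input $1$ must track the event of interest. A naive symmetric grouping would blend these two roles and dilute the extreme probabilities ($0$ and $1$) that characterize singular disagreement, so the asymmetry between the two inputs on each side is essential. Once the grouping is fixed, the remaining verifications are routine bookkeeping, and invoking the preceding $2$-input $2$-output theorem closes the argument.
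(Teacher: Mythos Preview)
Your proposal is correct and follows essentially the same reduction as the paper: locally coarse-grain inputs and outputs to obtain a $2$-input $2$-output box that still exhibits singular disagreement, then invoke the $2\times 2$ impossibility result. The only difference is cosmetic: the paper groups the input-$0$ outputs by membership in $\alpha_0,\beta_0$ (for uniformity with the common-certainty proof), whereas you use the singletons $\{0\}$; both choices work here because the singular-disagreement conditions are stated only for the specific outcomes $a=0$ and $b=0$.
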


\begin{proof}
Like in Theorem \ref{thm:generalization-app}, we show that any no-signaling box with singular disagreement induces a 2-input 2-output no-signaling box with the same property, and rely on Theorem 8 to deduce that no quantum system can give rise to singular disagreement.

Analogously to Theorem \ref{thm:generalization-app}, to prove the Theorem for singular disagreement we
define a mapping from a distribution  $\left\{ p(ab|xy) :a\in\mathcal{A},b\in\mathcal{B},x\in\mathcal{X},y\in\mathcal{Y}\right\}$
\color{black} to an `effective' distribution  $\left\{ \tilde{p}(\tilde{a}\tilde{b}|\tilde{x}\tilde{y}):\tilde{a},\tilde{b},\tilde{x},\tilde{y}\in\left\{ 0,1\right\}\right\}$ \color{black}
such that the following conditions hold: 
\begin{enumerate}
\item if $\left\{ p(ab|xy)\right\} $ is quantum, then
so is $\left\{ \tilde{p}(\tilde{a}\tilde{b}|\tilde{x}\tilde{y})\right\} \,,$\label{enu:ptilde-NS-singdis} 
\item if $\left\{ p(ab|xy)\right\} $ satisfies singular disagreement, then
so does $\left\{ \tilde{p}(\tilde{a}\tilde{b}|\tilde{x}\tilde{y})\right\} \,.$\label{enu:ptilde-singdis} 
\end{enumerate}
Again, the number of inputs can be reduced to 2 without loss of generality\textbf{.}
To group the outputs, we notice that the sets $A_{0},B_{0}$ also
play a role in singular disagreement, as they group the outputs of
each party which lead them to assign their respective probabilities
to the event. Then, we group the outputs according to whether or not
they belong in the sets $\alpha_{0},\beta_{0}$ (for input 0) and
whether or not they correspond to the event obtaining, i.e. whether
or not they are equal to 1 (for input 1). We obtain the same mapping
(\ref{eq:ptoptilde}) as before, substituting $\alpha_{N}$ for $\alpha_{0}$
and $\beta_{N}$ for $\beta_{0}$. With this replacement, condition
\ref{enu:ptilde-NS-singdis} follows by the same proof as before.
To check condition \ref{enu:ptilde-singdis}, we know that, for all
$a^{*}\in\alpha_{0},$ 
\begin{equation}
p(b=1|a^{*},x=0,y=1)=1
\end{equation}
and so 
\begin{equation}
\begin{aligned}p(a^{*}1|01) & =\sum_{b\in\mathcal{B}}p(a^{*}b|01).\end{aligned}
\end{equation}
Summing over $a^{*}\in\alpha_{0}$ and rewriting the expression in
terms of $\tilde{p},$ we find 
\begin{equation}
\tilde{p}(01|01)=\sum_{\tilde{b}\in\{0,1\}}\tilde{p}(0\tilde{b}|01)
\end{equation}
which implies 
\begin{equation}
\tilde{p}(\tilde{b}=1|\tilde{a}=0,\tilde{x}=0,\tilde{y}=1)=1.
\end{equation}
Similarly, for all $b^{*}\in\beta_{0}$ we have 
\begin{equation}
p(a=1|b^{*},x=1,y=0)=0,
\end{equation}
hence 
\begin{equation}
p(1b^{*}|10)=0
\end{equation}
and so, by adding over $b^{*}\in\beta_{0}$ and mapping to $\tilde{p},$
we find 
\begin{equation}
\tilde{p}(10|10)=0
\end{equation}
as required. 
\end{proof}

%
%

\end{document}